\documentclass[12pt,a4paper]{article}

\addtolength\textheight{100pt}
\addtolength\topmargin{-50pt}

\addtolength\textwidth{70pt}
\addtolength\oddsidemargin{-35pt}
\addtolength\evensidemargin{-35pt}

\usepackage{amsfonts}
\usepackage{amssymb}
\usepackage{amsmath}
\usepackage{amsthm}
\usepackage{amscd}
\usepackage{graphicx}
\usepackage{color}
\usepackage[colorlinks=true,citecolor=black,linkcolor=black,urlcolor=blue]{hyperref}
\usepackage{comment}

\theoremstyle{plain}
\newtheorem{theorem}{Theorem}[section]
\newtheorem{lemma}{Lemma}[section]

\theoremstyle{definition}
\newtheorem{definition}{Definition}[section]

\theoremstyle{remark}
\newtheorem{remark}{Remark}[section]

\newcommand{\dd}{\mathrm{d}}
\newcommand{\prt}{\partial}
\newcommand{\be}{\begin{equation}}
\newcommand{\ee}{\end{equation}}
\newcommand{\bea}{\begin{eqnarray}}
\newcommand{\eea}{\end{eqnarray}}
\def\ba#1\ea{\begin{align}#1\end{align}}
\def\bas#1\eas{\begin{align*}#1\end{align*}}
\newcommand{\bd}{\begin{displaymath}}
\newcommand{\ed}{\end{displaymath}}
\newcommand{\nn}{\nonumber\\}
\newcommand{\whr}{\ \textrm{, where} \ }
\newcommand{\mand}{\quad \textrm{and} \quad}

\newcommand{\mt}{\textrm}
\newcommand{\mb}{\mathbb}
\newcommand{\mc}{\mathcal}

\newcommand{\id}{\mathrm{id}}

\newcommand{\st}{\ \textrm{s.t.}\ }
\newcommand{\1}{\mathbf{1}}
\newcommand{\ti}{\tilde}
\newcommand{\res}{\mathrm{Res}}

\definecolor{r}{rgb}{1,0,0}

\title{\bf Combinatorial Hopf algebra for\\the Ben Geloun-Rivasseau tensor field theory}

\author{Matti Raasakka$^{a}$ and Adrian Tanasa$^{a,b}$\\{\tt\small matti.raasakka@lipn.univ-paris13.fr, 
adrian.tanasa@ens-lyon.org}\\
}

\date{}

\begin{document}

\maketitle

\begin{abstract}
The Ben Geloun-Rivasseau quantum field theoretical model is the first tensor model shown to be 
perturbatively renormalizable. 
We define here an appropriate Hopf algebra describing the 
combinatorics of this new tensorial renormalization.
The structure we propose is significantly different from the previously defined Connes-Kreimer 
combinatorial Hopf algebras due to the involved combinatorial and topological properties of the tensorial Feynman graphs. In particular, the 2- and 4-point function insertions must be defined to be 
non-trivial only if the superficial divergence degree of the associated Feynman integral 
is conserved.
\end{abstract}

\newpage

\section{Introduction \& Motivation}
The manuscript at hand concerns the combinatorics of renormalization of tensor field theory (TFT) models. TFT models are a path integral formulation of quantum field theories (QFT), whose Feynman graphs correspond combinatorially to discrete geometries.
In the perturbative expansion of the partition function of a TFT model, to each such Feynman graph is associated a probability (called the Feynman amplitude in QFT), and accordingly we may understand the TFT model as a quantum model of discrete geometry. 
Recently, the first renormalizable rank four TFT model with Lie group valued field variables was formulated and studied by Ben Geloun and Rivasseau \cite{BR}. Since then, the renormalizability of several other TFT models has further been established \cite{BgS,COR,BgL,SVT}.

On the combinatorial level of Feynman graphs the renormalizability property can be phrased in a purely 
algebraic way, as was shown by Connes and Kreimer \cite{CK}. This was followed by the formulation in terms of a Riemann-Hilbert correspondence for categories of
differential systems by Connes and Marcolli \cite{CM, cm2}.

The Connes-Kreimer combinatorial approach to renormalization was initially applied to local QFT models.
When one generalizes to QFT models on non-commutative spaces, such as the Moyal space, locality is lost due to the quantum uncertainty relations between space coordinates.
However, it was shown in \cite{TV,TK} that by replacing the notion of locality with a 
new concept, the so-called `Moyality' property for interactions, the Hopf algebraic structure of Feynman graphs allowing for the renormalization of particular non-commutative models can be established.
The Connes-Kreimer  formalism of matrix field theory, whose two-stranded ribbon Feynman graphs are dual to two-dimensional triangulated surfaces, was also considered in \cite{M}. In addition, Hopf algebraic structures of quantum gravity spin foams, closely related to TFT models, have been considered in \cite{Tan}.
The Hopf algebra of tensorial Feynman graphs of group field 
theory and the associated Schwinger-Dyson equations have also been considered in \cite{Kra}.

The goal of the current paper is then to continue in the above direction of research by formulating the combinatorial Connes-Kreimer Hopf algebraic structure of the Ben Geloun-Rivasseau (BGR) model. Let us emphasize that this constitutes the first application of the Connes-Kreimer combinatorial renormalization to a TFT model of rank-$D$ for $D>2$.

First important difference we note to the usual case of local QFT is the structure of external constraints, namely, the momentum conservation of local QFT is replaced by the so-called tensorial invariance. Moreover, the implementation of the Connes-Kreimer formalism requires particular attention at the level of the $n$-point function insertions. This follows from the fact that the divergencies of the BGR $n$-point functions are not only dependent on the external topological data of the Feynman graph, as in ordinary local QFT models. Thus, only when the superficial divergence degree is conserved, a BGR tensor graph insertion is defined to be non-trivial. This is 
a crucial difference with respect to the previously defined Connes-Kreimer structures for local QFT.

\section{Connes-Kreimer algebra for local QFT}\label{sec:CK}

\subsection{Feynman graphs of quantum field theory}
The basic feature of any QFT model is that it gives a prescription for associating quantum probability amplitudes (taking values in $\mb{C}$) to physical processes. In the path integral formulation of a QFT model, these amplitudes can be extracted from the generating functional $\mc{W}$ of the model that is typically expressed as an infinite dimensional functional integral of the type
\ba
	\mc{W}[J] = \int \mc{D}\phi\ e^{-S[\phi] + \phi\cdot J}\ ,
\ea
where $\phi$ and $J$ denote some fields (usually taken to be smooth sections of a fiber bundle), or collections of fields, on a fixed background manifold $X$. $S[\phi]$ is called the action functional defining the model, and the integral measure can typically be informally written as an infinite product of Lebesgue measures over the field values at each point in $X$, $\mc{D}\phi := \prod_{x\in X} \dd\phi(x)$. Then, the correlation functions $\langle \cdots \rangle$ for the monomials in the field variables, which correspond to the basic physical observables of the model, can be written as functional derivatives of the generating functional,
\ba\label{eq:qftcorrelations}
	\langle \phi(x_1) \cdots \phi(x_n) \rangle \equiv \int \mc{D}\phi\ \phi(x_1) \cdots \phi(x_n) e^{-S[\phi]} = \left. \frac{\delta}{\delta J(x_1)} \cdots \frac{\delta}{\delta J(x_n)} \mc{W}[J] \right|_{J=0}\ .
\ea
The well-definedness of such an expression is, of course, highly questionable. Indeed, apart from rare exceptions, these expressions are highly divergent.

Nevertheless, one can extract finite quantities from these formal divergent expressions, which have been shown to match experimental results with remarkable accuracy. The main tool in this direction is perturbation theory. Typically the action functional is of the form $S[\phi] = S_{\mt{kin}}[\phi] + S_{\mt{int}}[\lambda;\phi]$, where $S_{\mt{kin}}[\phi]$ is quadratic in the field variables, and the higher order polynomial interaction part $S_{\mt{int}}[\lambda;\phi]$ is proportional to a collection of parameters $\lambda$, the coupling constants of the model. Thus, the kinetic part $S_{\mt{kin}}[\phi]$ by itself gives a simple (infinite dimensional) Gaussian measure, whose correlation functions can be computed explicitly. One can then expand the integrand in equation (\ref{eq:qftcorrelations}) in powers of the coupling constant(s) $\lambda$, and in each finite order one ends up with a sum over products of correlation functions of the Gaussian measure with given weights. It is these terms in the perturbative 
expansion, and their combinatorial structure in particular, for which Feynman graphs provide a very convenient bookkeeping device. The contribution to the full correlation function associated to a particular Feynman graph is called the corresponding Feynman amplitude, and can be obtained from the graph by applying the set of Feynman rules of the particular model. Any finite order contribution is then obtained as a sum over the amplitudes of a finite number of Feynman graphs.

However, individual terms in this perturbative series are generically divergent. This particular problem is addressed by the framework of perturbative renormalization. 
The Connes-Kreimer combinatorial Hopf algebra of Feynman graphs is aimed at organizing the renormalization of the individual Feynman amplitudes in the perturbative expansion in a coherent unifying algebraic structure.

\bigskip

Let us then give a standard example of the 
general points discussed above about the relationship of Feynman graphs to QFT models: the $\lambda\phi^4$ scalar field model on $\mb{R}^4$ (with Euclidean metric). This model is determined by the action functional $S[\phi] = S_{\mt{kin}}[\phi] + S_{\mt{int}}[\lambda;\phi]$, where
\ba
\label{act-x}
	S_{\mt{kin}}[\phi] = \frac{1}{2}\int_{\mb{R}^4} d^4x\  \phi (\Delta + m^2) \phi \mand S_{\mt{int}}[\lambda;\phi] = \frac{\lambda}{4!} \int_{\mb{R}^4} d^4x\ \phi^4 \ .
\ea
Here, the single field $\phi$ takes values in $\mb{R}$, and the parameters of the model $m,\lambda\in\mb{R}_+$ correspond to the mass of the particle species modelled and the coupling strength, respectively. 
$\Delta$ denotes the Laplace operator. 
The interaction is local in the sense that it couples the field values only at the same point in space. It is usual to consider the model in the momentum space, obtained through Fourier transform, instead of the direct space, since due to the translation symmetry of the model the total momentum is conserved. 
In momentum space, the action \eqref{act-x} reads explicitly
\ba
	S_{\mt{kin}}[\ti{\phi}] &= \frac{1}{2}\int_{\mb{R}^4} \dd^4p\ \ti{\phi}({p}) (|{p}|^2 + m^2) \ti{\phi}({p}) \,, \\
	S_{\mt{int}}[\lambda;\ti{\phi}] &= \frac{\lambda}{4!} \int_{(\mb{R}^4)^{4}} \big[\prod_{i=1}^4 \dd^4p_i\big] \ti{\phi}({p}_1)\ti{\phi}({p}_2)\ti{\phi}({p}_3)\ti{\phi}({p}_4) \delta^4({p}_1+ {p}_2+ {p}_3+ {p}_4)\,,\label{eq:phi4int}
\ea
where $\ti{\phi}({p}) := \int_{\mb{R}^4}\dd^4x\ e^{-i{p}\cdot{x}} \phi({x})$ is the Fourier transform of the field $\phi$. In the interaction term, the Dirac delta distribution $\delta^4({p}_1+ {p}_2+ {p}_3+ {p}_4)$ imposes the 
so-called {\it momentum conservation}. The covariance of the kinetic part, i.e., the (free) \emph{propagator}, can be explicitly computed to yield
\ba\label{eq:freeprop}
	\mc{P}({p}_1,{p}_2):= \langle \ti{\phi}({p}_1) \ti{\phi}({p}_2) \rangle_{\mt{kin}} = \int \mc{D}\ti{\phi}\ \ti{\phi}({p}_1) \ti{\phi}({p}_2) e^{-S_{\mt{kin}}[\ti{\phi}]} = \frac{1}{|{p}_1|^2+m^2} \delta^4({p}_1-{p}_2) \,.
\ea
Moreover, due to Wick's theorem, any correlation function of such Gaussian measure can be expressed as a sum of products of propagators
\ba\label{eq:Gausscorr}
	\langle \ti{\phi}({p}_1) \ti{\phi}({p}_2) \cdots \ti{\phi}({p}_{2n}) \rangle_{\mt{kin}} = \sum_{\sigma\in\Sigma_{2n}/\sim} \left( \prod_{i=1}^{n} \langle \ti{\phi}({p}_{\sigma(2i-1)}) \ti{\phi}({p}_{\sigma(2i)}) \rangle_{\mt{kin}} \right)\,,
\ea
where $\Sigma_{2n}/\sim$ is the set of $(2n)!$ permutations of $2n$ elements modulo the $n!$ permutations of pairs of elements with indices $(2i-1,2i)$ and the $2^n$ permutations of elements within the pairs, i.e., the set of all possible pairings of indices.\footnote{For example, ${\Sigma_{6}/\sim} = \{(1,2,3,4,5,6),(1,2,3,6,5,4),(1,3,2,4,5,6),(1,3,2,6,5,4),(1,4,3,2,5,6),$ $(1,4,3,6,5,2),(1,5,3,4,2,6),(1,5,3,6,2,4),(1,6,3,4,5,2),(1,6,3,2,5,4)\}$.} For a monomial of odd degree the correlation function vanishes, since the Gaussian measure is symmetric. The QFT perturbation expansion  is based on this  fact. For example, the $k$'th order contribution to the full correlation function of $n$ field values, the \emph{n-point function}, is obtained by expanding the interaction part of the exponential integrand, and picking out the $k$'th order term
\ba
	\langle \ti{\phi}({p}_1) \ti{\phi}({p}_2) \cdots \ti{\phi}({p}_n) \rangle_k = \frac{(-1)^k}{k!} \int \mc{D}\ti{\phi}\ \ti{\phi}({p}_1) \ti{\phi}({p}_2)\cdots \ti{\phi}({p}_2) (S_{\mt{int}}[\lambda;\ti{\phi}])^k e^{-S_{\mt{kin}}[\ti{\phi}]} \,.
\ea
By substituting the explicit expression for the interaction term \eqref{eq:phi4int}, and writing all the resulting correlation functions of the free theory in terms of propagators using \eqref{eq:freeprop} and \eqref{eq:Gausscorr}, one can explicitly calculate this $k$'th order contribution to the $n$-point function of the theory. In QFT, the physical interpretation of this quantity is that it gives the (non-normalized) probability amplitude for a scattering process from an initial state with $m$ free particles having momenta $({p}_i)_{i=1,\ldots,m}$ to a final state with $n-m$ free particles having momenta $({p}_i)_{i=m+1,\ldots,n}$.

\begin{figure}
\centering
\begin{minipage}{0.20\linewidth}\centering
\def\svgwidth{0.8\columnwidth}
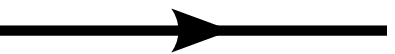
\end{minipage}\hspace{0.05\linewidth}
\begin{minipage}{0.20\linewidth}\centering
\def\svgwidth{0.8\columnwidth}
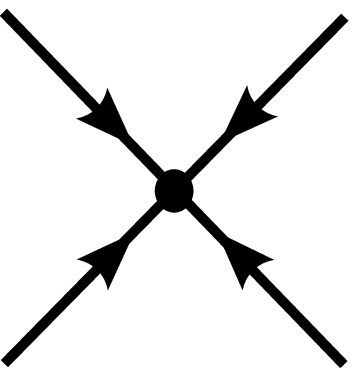
\end{minipage}\hspace{0.1\linewidth}
\begin{minipage}{0.4\linewidth}\centering
\def\svgwidth{0.8\columnwidth}
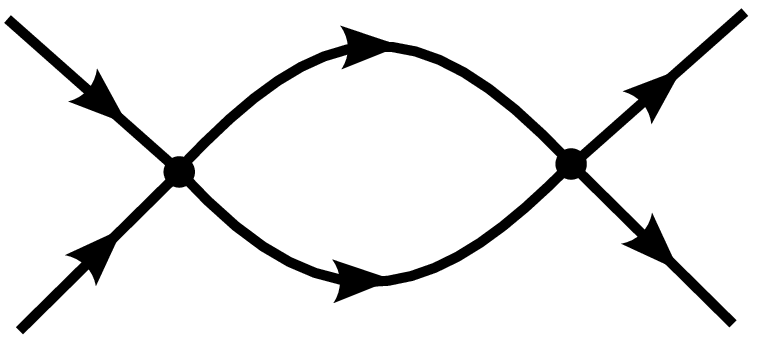
\end{minipage}
\caption{\label{fig:phi4}On the right: The Feynman graph representation of the propagator and the 4-point-interaction for the $\lambda\phi^4$-model. On the left: An example of a second order Feynman graph with four external edges.}
\end{figure}

Let us introduce:
\begin{definition}
A \emph{Feynman graph} $\Gamma$ of a QFT model is a graph that consists of a set of vertices $\Gamma^{[0]}$ and a set of edges $\Gamma^{[1]}$. The edges $\Gamma^{[1]}_{\mt{int}}\subseteq \Gamma^{[1]}$ that are connected to vertices (or to the same vertex) at both ends are called \emph{internal}, while the remaining edges $\Gamma^{[1]}_{\mt{ext}}\subseteq \Gamma^{[1]}$ (either connected to a vertex only at one end, or not connected to a vertex at all) are called \emph{external}. The external edges $e^{ext}_i \in \Gamma^{[1]}_{\mt{ext}}$ are labelled by incoming external momenta $e^{ext}_i \mapsto \vec{p}_i$. 
The external momenta obey the momentum conservation $\sum_i \vec{p}_i=0$. We denote the space of the external momenta of $\Gamma$ by $P_\Gamma \equiv \{ (\vec{p}_i) \in \times_i P_\Gamma^i : \sum_i \vec{p}_i=0\}$, where $\vec{p}_i\in P_i$. 
Let us denote the set of all Feynman graphs of a model by $\mc{G}$
\end{definition}

\bigskip

Let us briefly describe the Feynman graphs and the associated Feynman rules of the $\lambda\phi^4$-model. Each Feynman graph corresponds to an individual summand in the perturbative expansion of the correlation functions with respect to $\lambda$. The Feynman graphs of the $\lambda\phi^4$-model consist of a single type of edges (the propagator) labelled by a momentum variable and four-valent vertices (the interaction), as depicted in Fig.~\ref{fig:phi4}. The number of vertices in a Feynman graph equals the order of the corresponding summand in the perturbative expansion. An arrow on an edge denotes the direction of the momentum flow $p$, and can be assigned arbitrarily; the opposite orientation corresponds to $-p$. In the following we will often just drop the momentum labels and arrows, since their specific assignment is immaterial --- they can always be reintroduced arbitrarily. The Feynman amplitude associated to a particular Feynman graph is constructed as follows:
\begin{enumerate}
	\item Let $p_l$ be the momenta labels on the edges $l\in\Gamma$ of the Feynman graph $\Gamma$. We form the product of free propagators
	\ba
		\prod_{l\in\Gamma} \frac{1}{|p_l|^2 + m^2} \,.
	\ea
	\item For each vertex $v\in\Gamma$, multiply by the coupling constant $\lambda$, and impose momentum conservation by multiplying the above product by
	\ba
		\prod_{v\in\Gamma} \delta^4(\sum_{l\sim v} \sigma_{vl} p_l) \,,
	\ea
	where $l\sim v$ denotes that the edge $l$ is connected to the vertex $v$, and $\sigma_{vl}=\pm 1$ according to whether $p_l$ is incoming or outgoing with respect to the vertex $v$.
	\item Integrate over $p_l$ associated to the internal edges of the graph.
\end{enumerate}
The quantity that results from applying these rules to a Feynman graph is the Feynman amplitude of that graph.

For any Feynman graph, the amplitude factorizes to the product of the amplitudes of its connected parts, so one can just focus only on connected Feynman graphs. On the level of the generating functional, the disconnected graphs can be removed by replacing $\mc{W}[J]$ with the logarithm $\ln \mc{W}[J]$ in \eqref{eq:qftcorrelations}. Moreover, it is common to restrict to consider the so-called one-particle-irreducible (1PI) Feynman graphs, or what are called bridgeless graphs in graph theory language.
\begin{definition}
A Feynman graph $\Gamma$ is called \emph{one-particle-irreducible} (1PI), if $|\Gamma^{[1]}_{\mt{ext}}|>1$ and it is bridgeless, i.e., it cannot be disconnected by removing a single edge. We will also require $|\Gamma^{[1]}_{\mt{int}}|\geq 1$. We denote the set of all 1PI Feynman graphs by $\mc{G}_{1\mt{PI}}$.
\end{definition}
If there did exist a bridge, i.e., an edge connecting two separate parts of the graph, the momentum conservation would fix the momentum flow on this edge, and make it trivial to consider the amplitude in two separate contributions, glued together simply 
through the identification of the momentum variables on both sides of the bridge.

For example, by applying the above Feynman rules to the Feynman graph on the right-hand side of Fig.~\ref{fig:phi4}, we obtain for the corresponding Feynman amplitude the expression
\ba
	& \lambda^2 \int_{\mb{R}^4}\dd^4p_5 \int_{\mb{R}^4}\dd^4p_6 \left[ \prod_{l=1}^{6} \frac{1}{|p_l|^2 + m^2} \right] \delta^4(p_1+p_2-p_5-p_6) \delta^4(p_5+p_6-p_3-p_4) \nn
	=& \left[ \prod_{l=1}^{4} \frac{1}{|p_l|^2 + m^2} \right] \delta^4(p_1+p_2-p_3-p_4) \times \lambda^2 \int_{\mb{R}^4}\dd^4p_5 \frac{1}{|p_5|^2 + m^2} \frac{1}{|p_1+p_2-p_5|^2 + m^2} \,,
\ea
where we integrated over $p_6$. The first part of the bottom line contains the propagators and momentum conservation associated to the external edges of the graph. The second part corresponds to the internal structure, in particular, to the loop formed by edges $5$ and $6$ that allows for one free momentum integration, unconstrained by the momentum conservation at the vertices. In the limit $|p_5|\to\infty$  one has:
\ba
	\lambda^2 \int \dd^4p_5 \frac{1}{|p_5|^2 + m^2} \frac{1}{|p_1+p_2-p_5|^2 + m^2} \sim \lim_{\Lambda\rightarrow\infty} \lambda^2 \int^\Lambda \frac{\dd |p_5|}{|p_5|} \sim  \lambda^2 \lim_{\Lambda\rightarrow\infty}\ln \Lambda \,,
\ea
i.e., the momentum integration related to the loop is logarithmically divergent for large momentum --- thus it exhibits the so-called ultraviolet divergencies of QFT.

This is where renormalization comes in (when possible).
In order to establish renormalizability, of paramount importance is the so-called \emph{power counting theorem} that provides bounds on the divergence of Feynman amplitudes in terms of the combinatorial structure of the corresponding Feynman graphs. 
Denote by $N_{int}$ and $V$ the number of internal edges and the number of vertices, respectively. Then, the number of independent momentum integrations is $N_{int} - V + 1$, and thus the superficial divergence degree in this case reads $\omega := 4(N_{int} - V + 1) - 2N_{int} = 2N_{int} - 
4(V-1)$. On the other hand, we have $4V = 2N_{int} + N_{ext}$, where $N_{ext}$ is the number of external edges, so $\omega = 4 - N_{ext}$. This is the power counting theorem for the $\lambda\phi^4$-model. 
Due to this theorem, only graphs with two or four external edges can be {\it superficially divergent}.

These considerations prompt us to give the following definitions concerning the analytical aspects of Feynman amplitudes associated to Feynman graphs.
\begin{definition}
The \emph{regularized Feynman rules} of a local QFT model is a map $\Gamma \mapsto \phi_\Lambda(\Gamma) \in E_{\Gamma}$ for any $\Gamma\in\mc{G}$, where we denote by $E_{\Gamma}$ the linear vector space of generalized functions on the space $P_\Gamma$ of the external parameters $\vec{p}_i$ associated to the external edges $e^{ext}_i \in \Gamma^{[1]}_{\mt{ext}}$ of $\Gamma$. We call $\phi_\Lambda(\Gamma)$ the \emph{regularized Feynman amplitude} of the graph $\Gamma\in\mc{G}$. The parameter $\Lambda\in\mb{R}_+$ is the \emph{regularization cut-off}, such that in the limit $\Lambda\rightarrow\infty$ we recover the unregularized Feynman rules $\Gamma \mapsto \phi(\Gamma)$.
\end{definition}
\begin{remark}
In the following, we will consider the propagators associated to the external edges of $\Gamma$ not to be included in $\phi(\Gamma)$. Otherwise, one can also simply factor these out of the Feynman amplitudes.
\end{remark}
\begin{definition}
A \emph{renormalization scheme} is specified by a linear operator $R$, which extracts the divergent part of a Feynman amplitude, so that $\lim_{\Lambda\rightarrow\infty} \rho_\Lambda(\Gamma) - R(\rho_\Lambda(\Gamma)) < \infty$.
\end{definition}
\begin{remark}
There is inherent ambiguity in choosing the renormalization scheme, but all the schemes should lead to the same results, when only differences of amplitudes are considered. Also in practise, there are many different choices for implementing the regularization. In order to define the $R$-operator, one may consider the regularized Feynman amplitudes to lie in the space of formal Laurent series in a cut-off parameter $\epsilon \sim \Lambda^{-1}$, and then the $R$-operator projects onto the series with negative powers of $\epsilon$.
Nevertheless, most renormalization scheme used in QFT are not specified by a Rota-Baxter operator
(for example, this is the case of the celebrated BPHZ scheme). The interested reader may refer 
to \cite{patras} for details.
\end{remark}
\begin{definition}
A \emph{superficial divergence degree} is a map $\omega:\mc{G}\rightarrow \mb{Z}$ such that for any Feynman graph $\Gamma$ with $\lim_{\Lambda\rightarrow \infty}\rho_\Lambda(\Gamma)<\infty$ it satisfies $\omega(\gamma)< 0\ \forall \gamma\subset\Gamma$. Let us call $\Gamma\in\mc{G}_{1\mt{PI}}$ with $\omega(\Gamma)\geq 0$ \emph{superficially divergent}, and denote $\mc{G}_{sd}^{\omega} := \{\Gamma\in\mc{G}_{1\mt{PI}} : \omega(\Gamma) \geq 0\}$ the set of superficially divergent 1PI Feynman graphs according to $\omega$.
\end{definition}
\begin{remark}
Such a superficial divergence degree is usually obtained via a \emph{power counting theorem}, which relates features of the combinatorial and topological structure of Feynman graphs to the divergence of the corresponding Feynman amplitudes. Notice that the superficial divergence degree only sets bounds on the divergence. That is, not all superficially divergent graphs necessarily have divergent Feynman amplitudes. Conversely, a superficially convergent graph may have divergent subgraphs.
\end{remark}

\subsection{Hopf algebra of Feynman graphs}
\label{subsec:HopfFeynman}

Let us then proceed to describe the Connes-Kreimer algebra for local QFT. A Hopf algebra over $\mb{C}$ is a tuple $(\mc{H},u,m,\epsilon,\Delta,S)$, where
\begin{itemize}
	\item $\mc{H}$ denotes the set of elements,
	\item $u:\mb{C}\rightarrow \mc{H}$ denotes the unit,
	\item $m: \mc{H} \otimes \mc{H} \rightarrow \mc{H}$ denotes the product,
	\item $\epsilon: \mc{H} \mapsto \mb{C}$ denotes the counit,
	\item $\Delta: \mc{H} \rightarrow \mc{H} \otimes \mc{H}$ denotes the coproduct, and
	\item $S:\mc{H}\rightarrow \mc{H}$ denotes the antipode.
\end{itemize}
These objects are required to satisfy several compatibility conditions in order for $(\mc{H},u,m,\epsilon,\Delta,S)$ to form a Hopf algebra. (See, e.g., \cite{Manchon} for an exposition of Hopf algebras.)


Let us then paraphrase a fundamental result:
\begin{lemma}[\cite{Manchon}, Corollary II.3.2.]
A connected graded bialgebra is a Hopf algebra. The antipode map $S:\mc{H}\rightarrow\mc{H}$ may be obtained through the recursive formula
\ba\label{eq:Srecurs}
	S(\Gamma) = -\Gamma - m\circ (\mathrm{id}\otimes S)\circ\Delta'(\Gamma)
\ea
or, alternatively,
\ba\label{eq:Srecurs2}
	S(\Gamma) = -\Gamma - m\circ (S \otimes \mathrm{id})\circ\Delta'(\Gamma)
\ea
by setting $S(\1)=\1$.
\end{lemma}

\bigskip

We now proceed to define the corresponding operations on a set of Feynman graphs of a given QFT model, here the $\lambda\phi^4$ scalar field model already introduced in the previous subsection.
\begin{definition}\label{def:res}
The \emph{residue} $\res(\Gamma)$ of $\Gamma\in\mc{G}$ is the graph obtained by contracting all internal edges of $\Gamma$ to a point (or points, one for each connected component, if $\Gamma$ is disconnected). Clearly, $P_{\res(\Gamma)}=P_\Gamma$, since any Feynman graph of a local QFT model satisfies momentum conservation.
\end{definition}
\begin{definition}
	A graph $\gamma\in\mc{G}$ is a (proper) \emph{subgraph} of a graph $\Gamma\in\mc{G}$, denoted by $\gamma\subseteq\Gamma$ ($\gamma\subsetneq\Gamma$), if its internal edges are a (proper) subset of the internal edges of $\Gamma$. Two subgraphs are \emph{disjoint}, if they do not share any internal edges.
\end{definition}

Now, let us introduce some algebraic structure on the set $\mc{H}$ of disjoint unions of 1PI Feynman graphs of a local QFT model. Consider the associative commutative product $m:\mc{H}\otimes\mc{H}\rightarrow\mc{H}$ of graphs as given by the disjoint union $m(\Gamma\otimes\Gamma')=\Gamma \cup \Gamma'$. The unit with respect to this product is given by $u: \mb{C} \rightarrow \mc{H}$ such that $u(1)=\1$, where $\1\in\mc{H}$ denotes the empty graph. Thus, $\mc{H} \equiv \mb{C}[\1,\mc{G}_{1\mt{PI}}]$, the $\mb{C}$-module of polynomials generated by the elements of $\mc{G}_{1\mt{PI}}$ and the empty graph $\1$, equipped with the above operations linearly extended to $\mc{H}$ constitutes a unital associative algebra.

Next, let us formulate a coalgebra structure on $\mc{H}$. First we need to define the operation of subgraph contraction.
\begin{definition}\label{def:subcontr}
Define the operation of \emph{subgraph contraction} as follows: For $\gamma \subset \Gamma$ such that $\res(\gamma)$ corresponds to an interaction vertex of the model, let $\Gamma/\gamma\in\mc{G}$ be the graph that is obtained from $\Gamma$ by replacing $\gamma\subset\Gamma$ with $\res(\gamma)$ inside $\Gamma$. This definition of subgraph contraction has an obvious extension to the case of $\gamma$ being a disjoint union $\gamma=\cup_{i=1}^n\gamma_i$ of disjoint subgraphs $\gamma_i\subset\Gamma$ ($|(\gamma_i)^{[1]}_{int} \cap (\gamma_j)^{[1]}_{int}| = 0$), such that $\Gamma/\gamma \equiv (\cdots ((\Gamma/\gamma_1)/\gamma_2)\cdots /\gamma_n)$ is obtained from $\Gamma$ by replacing all $\gamma_i$ by $\res(\gamma_i)$ inside $\Gamma$.
\end{definition}
\begin{figure}
\centering
\def\svgwidth{0.9\columnwidth}
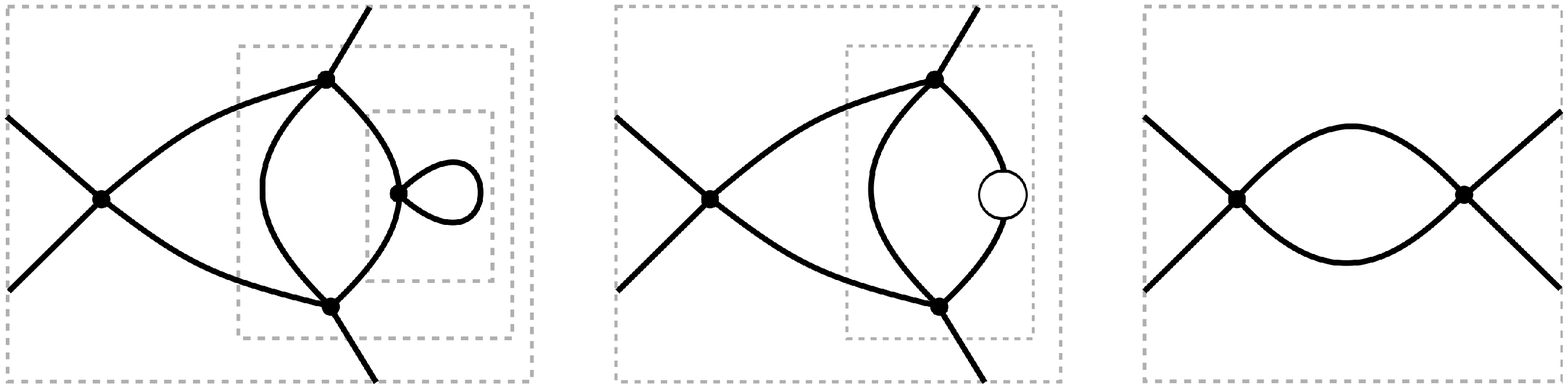
\caption{\label{fig:contrphi4}An example of subgraph contractions for $\lambda\phi^4$-model.}
\end{figure}

Now, we may introduce an operation $\Delta$ acting on the generators $\Gamma\in\mc{G}$ of $\mc{H}$ as
\ba
	\Delta(\Gamma) = \Gamma \otimes \1 + \1 \otimes \Gamma + \sum_{\substack{\gamma\in\cup\mc{G}_{sd}^\omega\\\gamma\subsetneq\Gamma}} \gamma \otimes \Gamma/\gamma \,.
\ea
Notice that here the sum runs over all disjoint unions $\gamma=\cup_i\gamma_i\in \cup\mc{G}_{sd}^\omega$ of superficially divergent disjoint 1PI subgraphs $\gamma_i$ of $\Gamma$. Since $\Delta$ is an algebra homomorphism, it may be extended to $\mc{H}$. It may be proved
\begin{lemma}[\cite{CM}, Theorem 1.27]
	$\Delta: \mc{H} \rightarrow \mc{H}\otimes\mc{H}$ is a coassociative coproduct.
\end{lemma}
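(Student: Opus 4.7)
The plan is to verify coassociativity directly on the generators of $\mc{H}$, that is, to check that $(\Delta\otimes\id)\circ\Delta(\Gamma) = (\id\otimes\Delta)\circ\Delta(\Gamma)$ for every $\Gamma\in\mc{G}_{1\mt{PI}}$. Because $\Delta$ is defined to be an algebra homomorphism for the disjoint-union product, and because both $(\Delta\otimes\id)\circ\Delta$ and $(\id\otimes\Delta)\circ\Delta$ are algebra homomorphisms to the threefold tensor product, equality on generators is enough to propagate to all of $\mc{H}$.

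First I would expand both sides into triple sums. From the definition, $(\id\otimes\Delta)\circ\Delta(\Gamma)$ contains, in addition to the boundary pieces involving $\1$, the sum $\sum_{\gamma,\gamma'} \gamma\otimes\gamma'\otimes(\Gamma/\gamma)/\gamma'$, where $\gamma$ ranges over (disjoint unions of) superficially divergent 1PI subgraphs of $\Gamma$ and $\gamma'$ ranges over the same in $\Gamma/\gamma$. Dually, $(\Delta\otimes\id)\circ\Delta(\Gamma)$ contains $\sum_{\gamma'',\tilde\gamma} \gamma''\otimes\tilde\gamma/\gamma''\otimes\Gamma/\tilde\gamma$, with $\gamma''\subsetneq\tilde\gamma\subsetneq\Gamma$ both in $\cup\mc{G}_{sd}^\omega$. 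The proof reduces to exhibiting a bijection between the index sets of the two sums that matches the corresponding summands.

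The central combinatorial step is the lifting correspondence: a disjoint union $\gamma'\subsetneq\Gamma/\gamma$ of superficially divergent 1PI subgraphs has a canonical preimage $\tilde\gamma\subseteq\Gamma$ obtained by reinserting, into each component of $\gamma'$ that passes through a contracted vertex of $\Gamma/\gamma$, the corresponding component of $\gamma$, while leaving any unabsorbed components of $\gamma$ as separate pieces of $\tilde\gamma$. Setting $\gamma'':=\gamma$ gives $\gamma''\subseteq\tilde\gamma$, and one checks the identities $\tilde\gamma/\gamma''=\gamma'$ and $\Gamma/\tilde\gamma=(\Gamma/\gamma)/\gamma'$, so the tensor factors match termwise. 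The inverse map sends $(\gamma'',\tilde\gamma)$ to $(\gamma,\gamma')=(\gamma'',\tilde\gamma/\gamma'')$. To justify that this correspondence respects membership in $\cup\mc{G}_{sd}^\omega$, I would invoke the $\lambda\phi^4$ power counting $\omega=4-N_{\mt{ext}}$: since lifting and contracting preserve the external-edge structure of a subgraph, superficial divergence and the 1PI property transfer consistently between $\Gamma/\gamma$ and $\Gamma$.

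Finally I would handle the unit-containing boundary terms by checking the degenerate cases of the bijection: $\gamma''=\1$ (so $\tilde\gamma=\gamma$, matching the $\1\otimes\gamma\otimes\Gamma/\gamma$ contributions on the left with the $\1\otimes\Delta'(\Gamma)$-type contributions on the right) and $\gamma'=\1$ or $\gamma=\Gamma$ (yielding $\Gamma\otimes\1\otimes\1$ and $\1\otimes\1\otimes\Gamma$ on both sides). The main obstacle, and the place I expect the argument to require genuine care, is the bookkeeping of the disjoint-union structure under the lifting map; specifically, one must verify that components of $\gamma$ not absorbed into any component of $\gamma'$ really do reappear as separate 1PI superficially divergent components of $\tilde\gamma$. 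This is where the decision to sum over $\cup\mc{G}_{sd}^\omega$ rather than just $\mc{G}_{sd}^\omega$ enters decisively, and it is the only subtle point distinguishing this check from the algebraic skeleton shared by all Connes--Kreimer coproducts.
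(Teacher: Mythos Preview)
The paper does not actually prove this lemma: it is stated with a citation to Connes--Marcolli and left at that. Your proposal is the standard Connes--Kreimer argument and is correct; in fact it coincides almost exactly with the proof the paper \emph{does} give later, in Section~\ref{sec:CKforBGR}, for the analogous coassociativity statement in the BGR model. There the authors also reduce from $\Delta$ to $\Delta'$, restrict to 1PI generators, expand both sides as triple sums, and match terms via the substitution $\gamma''=\gamma/\gamma'$, together with the identity $(\Gamma/\gamma')/(\gamma/\gamma')=\Gamma/\gamma$. The only structural difference is that you fold the closure of $\mc{G}_{sd}^\omega$ under contraction and insertion into the proof (using $\omega=4-N_{\mt{ext}}$), whereas the paper isolates closure as a separate lemma and then invokes it as a hypothesis; for $\lambda\phi^4$ your inline treatment is the natural one.
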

The counit of $\mc{H}$ with respect to $\Delta$ is given by the linear extension of $\epsilon: \mc{G} \rightarrow \mb{C}$ such that $\epsilon(\1)=1$ and $\epsilon(\Gamma)=0$ for all $\Gamma\neq\1$. $\mc{H}$ equipped with the above structure thus forms a bialgebra.
\begin{figure}
\centering
\def\svgwidth{\columnwidth}
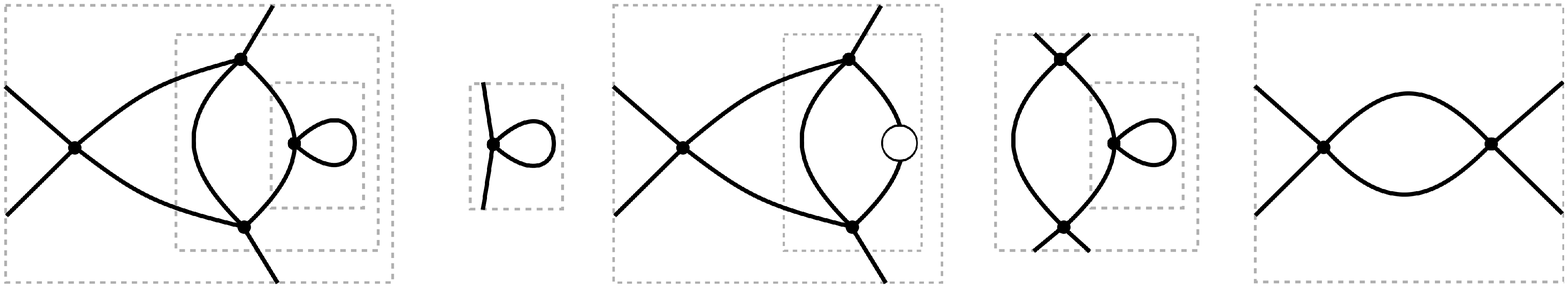
\caption{\label{fig:coprodphi4}An example of the coproduct structure for $\lambda\phi^4$-model.}
\end{figure}

Moreover, we have a natural grading for the elements of $\mc{H}$ given by the number of internal edges, which is compatible with the coproduct $\Delta$. Clearly, $\mc{H}^{(0)} = \mb{C}\1 = u(\mb{C})$. Thus, $\mc{H}$ constitutes a graded connected bialgebra, and accordingly we have
\begin{theorem}[\cite{CM}, Theorem 1.27]
	$(\mc{H},u,m,\epsilon,\Delta,S)$ is a Hopf algebra, where the antipode $S$ is given by the formula (\ref{eq:Srecurs}) (or (\ref{eq:Srecurs2})).
\end{theorem}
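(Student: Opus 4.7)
The plan is to reduce the theorem to the preceding lemma that a connected graded bialgebra is automatically a Hopf algebra, with antipode defined by the recursion (\ref{eq:Srecurs}) (equivalently (\ref{eq:Srecurs2})). Thus the entire task is to check that $(\mc{H},u,m,\epsilon,\Delta)$ is a connected graded bialgebra; once this is established, both the existence of $S$ and its recursive formula follow at once.

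First I would record that $(\mc{H},u,m)$ is a (unital, commutative, associative) algebra: the disjoint union $m(\Gamma\otimes\Gamma')=\Gamma\cup\Gamma'$ is evidently associative and commutative, with the empty graph $\1$ as unit. Next I would address the coalgebra structure. Coassociativity of $\Delta$ is precisely the content of the Lemma cited from \cite{CM} immediately before the theorem, so I would invoke it. Counitality of $\epsilon$ is a short direct calculation: in $\Delta(\Gamma)=\Gamma\otimes\1+\1\otimes\Gamma+\sum_{\gamma}\gamma\otimes\Gamma/\gamma$, applying $(\epsilon\otimes\id)$ kills every term except $\1\otimes\Gamma$ (since $\epsilon(\gamma)=0$ for any non-empty $\gamma$), returning $\Gamma$; the argument for $(\id\otimes\epsilon)$ is symmetric. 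Bialgebra compatibility amounts to checking that $\Delta$ and $\epsilon$ are algebra homomorphisms: $\epsilon$ is multiplicative because $\epsilon(\1)=1$ and $\epsilon$ vanishes on every non-empty generator, while $\Delta$ is multiplicative \emph{by definition}, since we extended it from generators using $\Delta(\Gamma\cup\Gamma')=\Delta(\Gamma)\Delta(\Gamma')$. The only thing to check here is internal consistency: on a disjoint union $\Gamma\cup\Gamma'$ the superficially divergent 1PI subgraphs split uniquely into a disjoint union of one from each side, so the two ways of computing $\Delta(\Gamma\cup\Gamma')$ agree.

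Next I would introduce the grading: set $\mc{H}^{(n)}$ to be the $\mb{C}$-span of disjoint unions of 1PI graphs whose total number of internal edges equals $n$. Compatibility with $m$ is obvious since disjoint union sums internal-edge counts. Compatibility with $\Delta$ is the key point: if $\gamma\subsetneq\Gamma$ has $k$ internal edges and $\Gamma$ has $n$, then $\Gamma/\gamma$ has exactly $n-k$ internal edges, because subgraph contraction deletes the internal edges of $\gamma$ and leaves every other internal edge of $\Gamma$ intact. Hence each term in $\Delta(\Gamma)$ lies in $\mc{H}^{(k)}\otimes\mc{H}^{(n-k)}$ for some $0\le k\le n$, giving $\Delta(\mc{H}^{(n)})\subseteq\bigoplus_{k=0}^n \mc{H}^{(k)}\otimes\mc{H}^{(n-k)}$. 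Connectedness, $\mc{H}^{(0)}=\mb{C}\1=u(\mb{C})$, follows immediately from the standing requirement $|\Gamma^{[1]}_{\mt{int}}|\geq 1$ in the definition of $\mc{G}_{1\mt{PI}}$, so the only grade-zero generator is the empty graph.

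Applying the cited bialgebra-to-Hopf-algebra lemma then gives the antipode and the recursive formulas, completing the proof. The only genuinely non-trivial step is the coassociativity of $\Delta$, which I am taking as granted from \cite{CM}; its combinatorial content — that sequentially extracting $\gamma'\subset\gamma\subset\Gamma$ can be reorganized as extracting $\gamma/\gamma'\subset\Gamma/\gamma'$ from what remains — is illustrated in Figure~\ref{fig:coprodphi4} and is the main obstacle a self-contained argument would need to overcome. Everything else is bookkeeping about internal-edge counts and the multiplicativity of $\Delta$.
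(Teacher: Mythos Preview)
Your proposal is correct and follows exactly the approach the paper takes: the paper simply observes that the number of internal edges gives a grading compatible with $\Delta$, that $\mc{H}^{(0)}=\mb{C}\1$, and then invokes the connected-graded-bialgebra lemma. You have just filled in the routine verifications (counitality, multiplicativity of $\Delta$ and $\epsilon$, additivity of the internal-edge count under contraction) that the paper leaves implicit.
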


Consequently, the combinatorial algebraic properties of Feynman graphs and amplitudes join their forces in the following theorem by Connes and Kreimer \cite{CK}.
\begin{theorem}[\cite{CK}]\label{thm:ren}
	For a local perturbatively renormalizable QFT model the renormalized Feynman amplitudes are given by the formula
\ba
	\phi_R(\Gamma) = S^\phi_R(\phi(\Gamma)) \star \phi(\Gamma) \,,
\ea
where
\ba
	S^\phi_R(\phi(\Gamma)) = - R[\phi(\Gamma)] - R\left[\sum_{\substack{\gamma\subsetneq\Gamma\\\gamma\in\mc{G}_{sd}^\omega}} S^\phi_R(\phi(\gamma)) \phi(\Gamma/\gamma) \right]
\ea
is given through recursion. Here, the $R$-operator defines the corresponding renormalization scheme. Notice that we have $S^\phi_R(\phi(\Gamma)) = -R(\phi(\Gamma))$ for $\Gamma$ without superficially divergent subgraphs, which facilitates the recursion.
\end{theorem}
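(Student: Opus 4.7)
The plan is to interpret the theorem as a Hopf-algebraic repackaging of the BPHZ forest formula. I would first observe that the regularized Feynman rules $\Gamma \mapsto \phi_\Lambda(\Gamma)$ extend to a character $\phi_\Lambda : \mc{H} \to \mc{A}$, i.e.\ a unital algebra homomorphism from $\mc{H}$ into the commutative target algebra $\mc{A}$ of regularized amplitudes, because Feynman amplitudes factorize on disjoint unions of graphs (as noted right before the 1PI definition). The map $S^\phi_R$ is then \emph{defined} by the given recursion; since each pair $(\gamma,\Gamma/\gamma)$ appearing in $\Delta'(\Gamma)$ has both entries of strictly smaller grading (number of internal edges) than $\Gamma$---any superficially divergent 1PI subgraph $\gamma$ has at least one internal edge---the recursion terminates on the unit $\1$ and specifies $S^\phi_R \circ \phi$ uniquely on every 1PI generator of $\mc{H}$, after which one extends multiplicatively.

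Next I would identify $S^\phi_R$ as the $R$-twisted antipode: comparing with (\ref{eq:Srecurs2}), one sees that the formula for $S^\phi_R$ is obtained from the antipode recursion $S(\Gamma) = -\Gamma - m \circ (S\otimes\id)\circ\Delta'(\Gamma)$ by transporting $S$ along $\phi$ and wrapping every nested step with the projection $R$. The convolution $\phi_R := S^\phi_R \star \phi = m_{\mc{A}} \circ (S^\phi_R \otimes \phi) \circ \Delta$ is then a well-defined map on $\mc{H}$, and a direct computation using $\Delta(\Gamma) = \Gamma \otimes \1 + \1 \otimes \Gamma + \sum_{\gamma} \gamma \otimes \Gamma/\gamma$ gives $\phi_R(\Gamma) = (\id - R)\bigl[\bar R(\phi(\Gamma))\bigr]$, where $\bar R(\phi(\Gamma)) := \phi(\Gamma) + \sum_{\gamma} S^\phi_R(\phi(\gamma))\,\phi(\Gamma/\gamma)$ is precisely Bogoliubov's preparation map, and the outer $(\id-R)$ performs the overall subtraction.

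The remaining task is the finiteness of $\phi_R(\Gamma)$ as $\Lambda \to \infty$, which I would prove by induction on the grading of $\Gamma$. For the base case, a graph $\Gamma \in \mc{G}_{1\mt{PI}}$ without any superficially divergent proper subgraphs contributes nothing to $\Delta'(\Gamma)$ from $\cup \mc{G}_{sd}^\omega$, so $S^\phi_R(\phi(\Gamma)) = -R(\phi(\Gamma))$ and $\phi_R(\Gamma) = (\id - R)\phi(\Gamma)$ is finite by the defining property of the renormalization scheme. For the inductive step, unfolding the recursion of $S^\phi_R$ using coassociativity of $\Delta$ converts $\phi_R(\Gamma)$ into a sum over forests of superficially divergent 1PI subgraphs of $\Gamma$, each forest contributing a nested composition of $R$-subtractions on the corresponding sub-amplitudes. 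One recognises Zimmermann's forest formula, whose finiteness follows from applying the power counting theorem to the quotient graphs at each level of nesting.

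The main obstacle is the combinatorial identification of the recursive formula with the forest formula, i.e.\ verifying that the expansion produced by iterating $\Delta'$ is exactly the sum over Zimmermann forests, with the correct signs and the correct placement of $R$. Coassociativity of $\Delta$ is the algebraic fact that makes this work, guaranteeing independence from the order in which nested subdivergences are extracted; the analytic finiteness of each individual forest contribution is the classical BPHZ theorem and would be quoted rather than re-proved. The Hopf-algebraic content of Theorem \ref{thm:ren} is precisely this compact recursive reformulation of what would otherwise be a combinatorially intricate forest sum.
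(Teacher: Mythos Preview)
The paper does not prove this theorem at all: it is stated as a result of Connes and Kreimer, with the citation \cite{CK}, and no proof or sketch is given in the text. Immediately after the statement the paper moves on to Definition~\ref{def:insert}. So there is no ``paper's own proof'' to compare your proposal against; the theorem functions here purely as background for the later BGR construction.

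That said, your sketch is a faithful outline of the standard Connes--Kreimer argument: extend the regularized Feynman rules to a character on $\mc{H}$, define $S^\phi_R$ by the $R$-twisted antipode recursion (well-founded by the grading), recognise $S^\phi_R \star \phi$ as Bogoliubov's $\bar R$-operation followed by the overall subtraction $(\id - R)$, and reduce finiteness to the classical BPHZ forest formula via coassociativity. One point worth tightening: you say the forest-formula finiteness ``would be quoted rather than re-proved,'' which is honest, but then the Hopf-algebraic content of the theorem is really only the \emph{identification} of $\phi_R$ with the BPHZ-renormalized amplitude, not an independent proof of its finiteness. If you intend to present this as a self-contained proof you should be explicit that the analytic input (Weinberg's theorem or the BPHZ convergence proof) is being imported wholesale; otherwise the argument is a reformulation rather than a proof of renormalizability.
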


Finally, let us describe the dual operation to subgraph contraction.
\begin{definition}\label{def:insert}
The operation of \emph{subgraph insertion} is defined as follows: Let $\Gamma\in\mc{G}$ and $v\in\Gamma^{[0]}$. Then, for $\gamma\in\mc{G}^\omega_{sd}$ such that $\res(\gamma)\sim v$, denoting $\res(\gamma)$ is of the type of $v$, we define $\gamma \circ_{v} \Gamma$ as the graph obtained from $\Gamma$ by replacing $v$ with $\gamma$. Let us also define
\ba\label{eq:insum}
	\gamma \circ \Gamma = \sum_{\substack{v\in\Gamma^{[0]}\\v\sim\res(\gamma)}} \gamma \circ_{v} \Gamma \,.
\ea
Notice that, in general, there may be several inequivalent ways of inserting $\gamma$ into a vertex $v$ of $\Gamma$. In this case we must provide additional gluing data and sum also over all different ways of insertion in (\ref{eq:insum}).
\end{definition}

In the $\lambda\phi^4$-model considered above Feynman graphs consist originally of a single type of edge and a single type of 4-valent vertex. The external edges are labelled by the external 4-momentum variables $\vec{p}_i$, which for any connected component of a Feynman graph satisfy the momentum conservation constraint $\sum_i \vec{p}_i = 0$. It is important in the view of subgraph insertions and contractions to note the almost trivial fact that any graph consisting of vertices, which satisfy the momentum conservation constraint, and propagators, which preserve momentum, satisfies itself the momentum conservation constraint.

As we have already seen, the superficial degree of divergence for a Feynman graph $\Gamma$ of the $\lambda\phi^4$-model is given by the formula $\omega=4-N_{ext}$, where $N_{ext}:=|\Gamma^{[1]}_{ext}|$ denotes the cardinality of $\Gamma^{[1]}_{ext}$. Thus, only graphs with two or four external edges need to be renormalized.
Accordingly, we must consider in addition two types of $2$-valent vertices corresponding to mass and wave function renormalization counter-terms, which arise as residues of superficially divergent graphs. 
We decorate them with the label "2",  as an indication of the quadratic character of the corresponding divergence.
The notion of external structure in local QFT is introduced in order to distinguish the contributions of 2-point functions to mass and wave-function renormalization.
\begin{definition}
The \emph{external structure} of a Feynman graph $\Gamma$ is specified by the action of a set of distributions $\{\sigma_v \in E_\Gamma^*\}$ labelled by the vertices of the model, such that $\langle \sigma_v,\phi(\Gamma)\rangle = \rho_v(\Gamma) \langle \sigma_v,\phi(\res(\Gamma))\rangle$, where $\rho_v(\Gamma) \in E_\Gamma$ are characters of $\mc{H}$, and $\langle\sigma_v,\phi(\res(\Gamma))\rangle$ is the kernel of the interaction functional associated to $v$.
\end{definition}
However, we do not make a distinction here at the level of graph drawing. For a graph with two external edges, its contributions to wave function and mass renormalization are given by the external structures for each corresponding vertex. In Figures \ref{fig:contrphi4} and \ref{fig:coprodphi4} we have provided some basic examples of the subgraph contraction operation and the coproduct, 
respectively, for $\lambda\phi^4$-model.

\bigskip

Before ending this section, let us emphasize the following fact of particular 
importance for the sequel.
The 2-valent vertices are associated with a quadratic divergence, since the power-counting simply gives $\omega=4-N_{ext}=2$ for them. Due to this simple fact one may insert any graph with two external edges into a bare propagator of any Feynman graph without changing the divergence degree of the graph, because one simultaneously needs to add a propagator, which adds a counter-balancing $-2$ to the divergence degree. Similarly, all 4-valent graphs are associated with a logarithmic divergence, which allows one to insert any graph with four external edges into a vertex without affecting the divergence degree. We already anticipate that this simple behavior does not hold for models with more complicated power-counting, such as the BGR model, in which case the notions of subgraph contraction insertion must be treated more carefully.

\section{The Ben Geloun-Rivasseau tensor field theory model}\label{sec:BGR}

The form of the BGR model derives from the combinatorics of 4-dimensional simplicial geometry. Consider a 4-simplex. Its boundary is given by five 3-simplices, i.e., tetrahedra. Each pair of these boundary tetrahedra shares one boundary triangle. This combinatorial structure can be illustrated by a stranded graph, as in Fig.~\ref{fig:colorvertex}. Here, each individual strand represents a triangle, while the five rectangular boxes, each connected to four strands entering the graph, represent the five boundary tetrahedra. The tetrahedra are considered colored, and we allow for two opposite orientations of the coloring. 
One considers colored tetrahedra and only two orientations of the coloring in order to simplify the combinatorics and the topology of this type of models (see, for example,  the paper \cite{Gurau}, where this type of model was first introduced).

One can then build larger stranded graphs from these building blocks by identifying boundary tetrahedra with matching colors. Moreover, we allow only for identification of tetrahedra belonging to 4-simplices of opposite orientations, taken into account by alignment of arrows in the boxes representing tetrahedra. 
The resulting stranded graph will then be combinatorially dual to a 4-dimensional simplicial pseudo-manifold. 
\begin{figure}
\centering
\def\svgwidth{0.6\columnwidth}
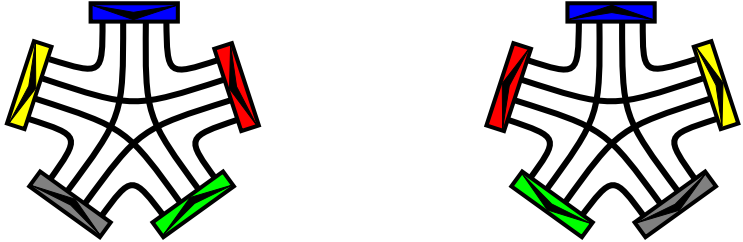
\caption{\label{fig:colorvertex}The two orientations of the stranded representation of the combinatorics of a 4-simplex, where we consider the colors to be: blue, red, green, grey and yellow.}
\end{figure}

A (quantum) statistical description of such simplicial geometries can be formulated as a colored tensor model.
 This is a QFT model, whose Feynman graphs have exactly such stranded structure. 
The model has fields of five different colors (each colored block in Fig. \ref{fig:colorvertex}
 being a field)
and each such field is  a complex-valued function on $U(1)^{\times 4}$.
The two stranded graphs in Fig.~\ref{fig:colorvertex} correspond to the two interaction terms of the model, while the propagator simply identifies the strands of the same color associated to interaction terms with opposite orientations. 
Such models are sometimes called {\it combinatorially non-local} due to the combinatorial nature of the non-locality of their interaction terms.

The Feynman graphs of the BGR model are obtained from such stranded representations of simplicial geometry by removing four of the five fields (namely the blue, the red, the green and the yellow fields in Fig. \ref{fig:colorvertex} above). Thus, one considers interaction terms
such as those illustrated in Fig.~\ref{fig:BGRvertices6} and Fig.~\ref{fig:BGRvertex41}. 
These terms correspond to one-to-one relations between particular colored graphs and 
new ``uncolored'' vertices which, by construction, only have grey field on the boundaries. 

These vertices are dual to four-dimensional polytopes, whose boundaries are composed of six or four tetrahedra. Again, the combinatorial connectivity of the strands represents the identifications of boundary triangles of the boundary tetrahedra.

Note that these are not the only possible ``uncolored'' vertices that one can construct. However,
these are the only type of vertices considered in the BGR model in order to ensure renormalizability (see again \cite{BR} for details).

In addition, a non-trivial kinetic term is added in order to induce dynamics, which allow for renormalization (see Fig. \ref{propa} for a graphical representation). Note that propagators cannot shuffle strands.

\begin{figure}
\centering
\def\svgwidth{0.25\columnwidth}
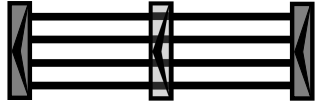
\caption{\label{propa}The kinetic term.}
\end{figure}

In Fig.~\ref{fig:BGRgraph1} we illustrate a simple example of a Feynman graph of the BGR model, where the propagators are represented by transparent rectangles, while the external lines end in solid rectangles.
\begin{figure}
\centering
\def\svgwidth{0.6\columnwidth}
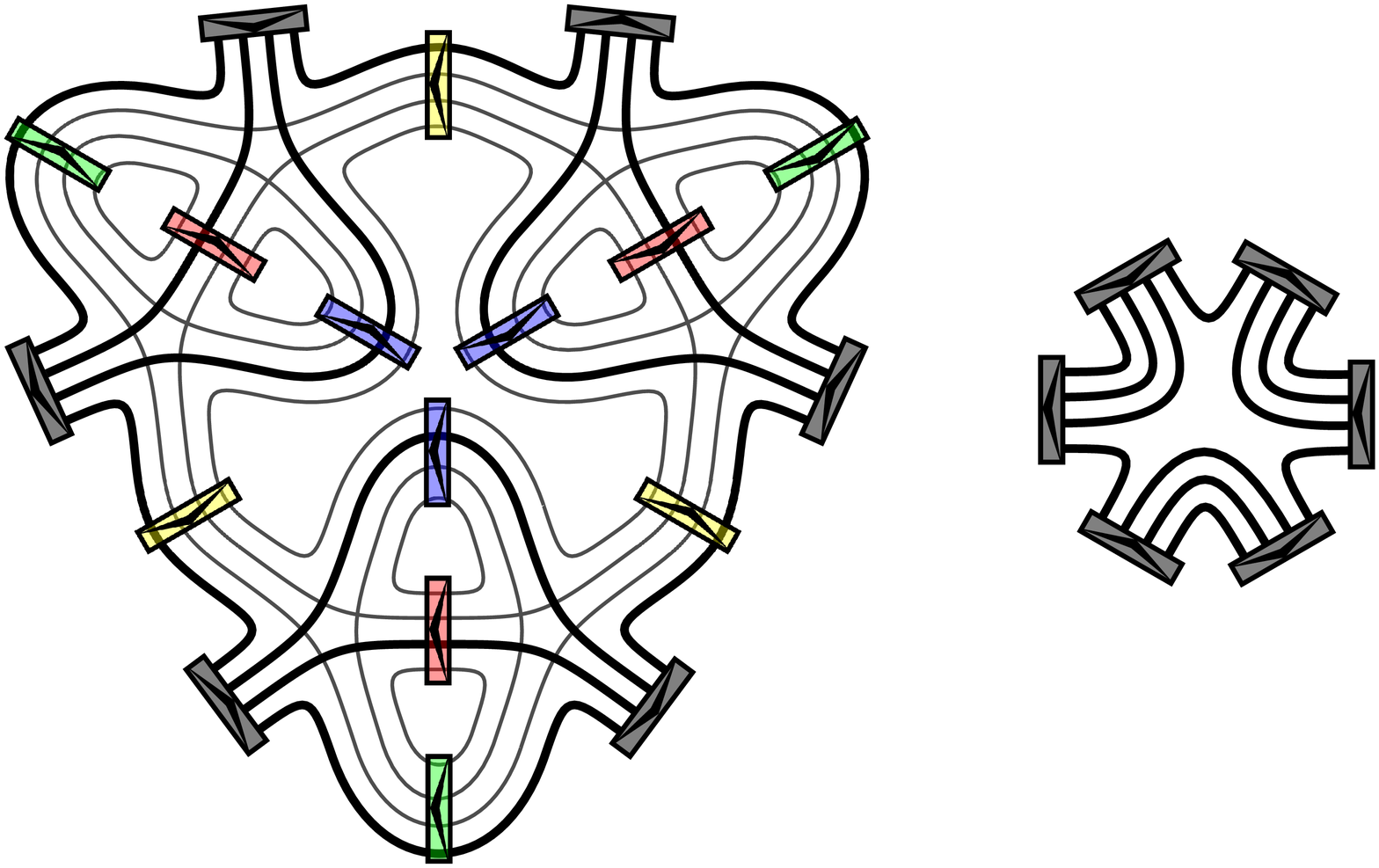
\vspace{10pt}
\def\svgwidth{0.8\columnwidth}
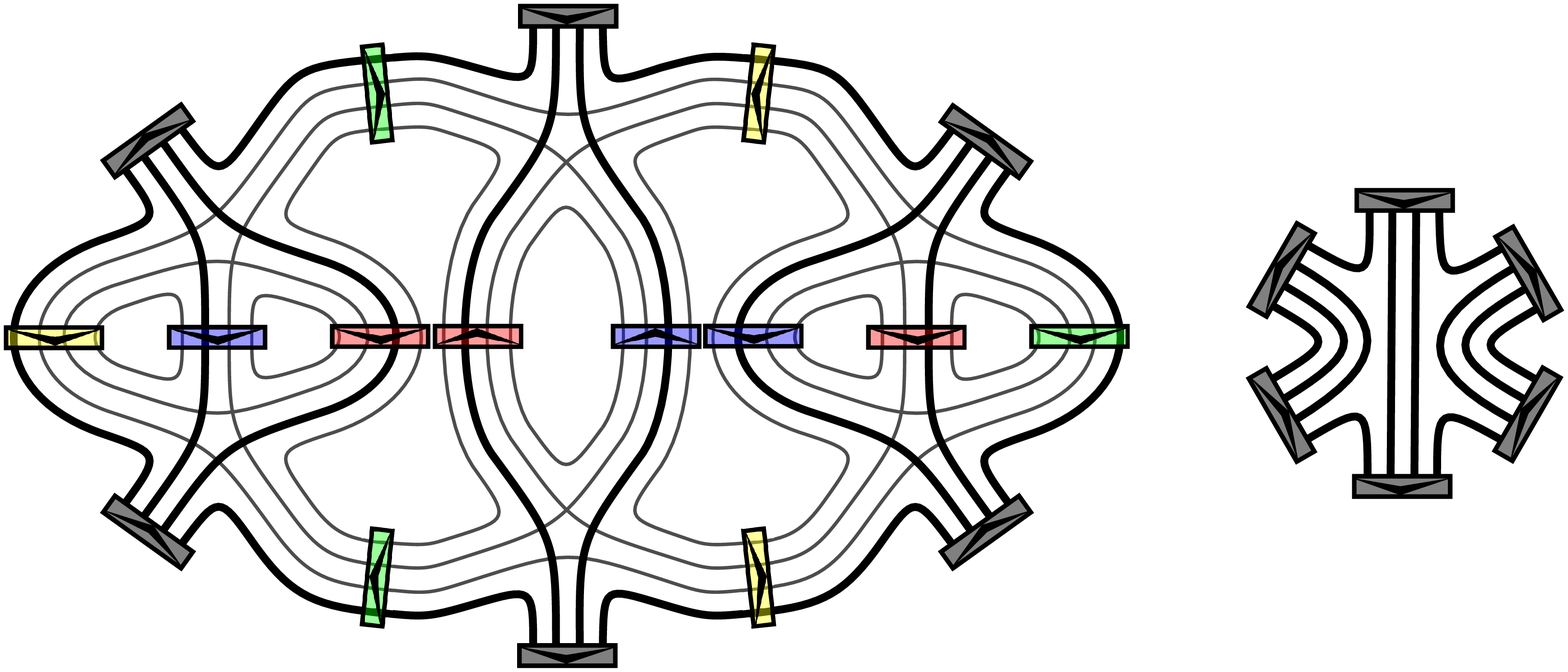
\caption{\label{fig:BGRvertices6}Stranded structures of the vertices $V_{6;1}$ (upper) and $V_{6;2}$ (lower) of the BGR model.} 
\end{figure}
\begin{figure}
\centering
\def\svgwidth{0.7\columnwidth}
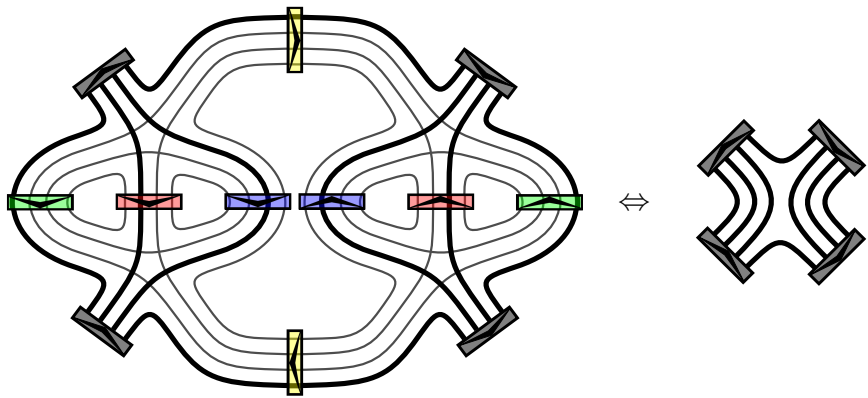
\caption{\label{fig:BGRvertex41}Stranded structure of the vertex $V_{4;1}$ of the BGR model.}
\end{figure}
\begin{figure}
\centering
\def\svgwidth{0.4\columnwidth}
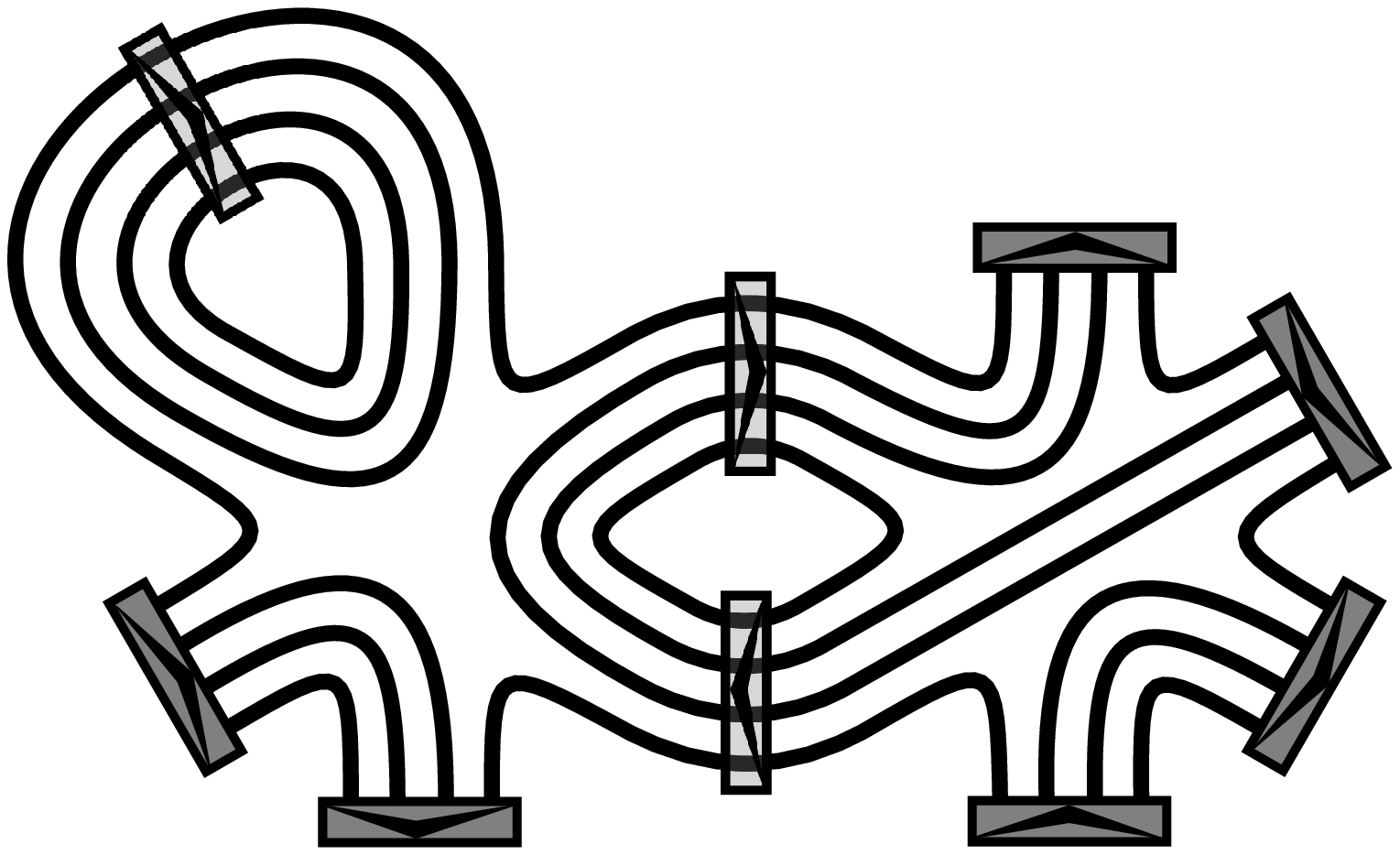
\caption{\label{fig:BGRgraph1}A simple example of a Feynman graph of the BGR model. There is a 
$V_{6;1}$ vertex on the left and a $V_{6;2}$ vertex on the right. $\vec{k},\vec{l},\vec{m},\vec{n},\vec{p},\vec{q}\in\mb{Z}^4$ label the external momenta, while $\vec{r},\vec{s},\vec{t}\in\mb{Z}^4$ label momenta running through internal propagators. The momentum components $k_i\in\mb{Z}$, $i=1,2,3,4$, are associated to the strands as indicated by the numbers (and similarly for the other momenta).}
\end{figure}

Let us then describe in detail the action functional of the BGR model. The fields of the model live on a direct product of four copies of the group $U(1)$. The momentum representation considered in the following is obtained through harmonic analysis, which yields the momentum space $\mb{Z}^4$. In the momentum representation the kinetic term of the action functional reads explicitly
\ba
	S_{kin}[\phi] = \sum_{\vec{p} \in \mb{Z}^4} \overline{\phi(\vec{p})}(|\vec{p}|^2 + m^2)\phi(\vec{p}) \,,
\ea
where $|p|^2 = \sum_{i=1}^4 |p_i|^2$. This leads to the propagator
\ba
	\mc{P}(\vec{p},\vec{q}) = \frac{1}{\sum_i |p_i|^2 + m^2} \prod_{i=1}^4 \delta_{p_iq_i}.
\ea
In the direct ({\it i.e.} $U(1)$) representation, by Fourier transform, the kinetic term reads
\ba
S_{kin}[\tilde \phi]=\int \overline{\tilde\phi}_{1,2,3,4}\left( - \sum_{s=1}^4 \Delta_s +m^2\right)
\tilde \phi_{1,2,3,4},
\ea
where the integral is performed for the Haar measure over the respective group variables, 
$\Delta_s$ denotes the Laplacian operator on $U(1)$ acting on the strand index $s$.
Note that $\tilde \phi$ denotes the Fourier transform of the function $\phi$.

The interaction terms are based on a completely different principle --- instead of locality, they are build on the notion of simplicial combinatorics, as described above. The interaction terms read
\ba
	S_{int}^{6;1}[\phi] = \lambda_{6;1}\sum_{\vec{p},\vec{p}',\vec{p}'' \in \mb{Z}^4} \sum_{\sigma\in\Sigma_4} & \phi_{\sigma_1,\sigma_2,\sigma_3,\sigma_4}\, \overline{\phi_{\sigma_1',\sigma_2,\sigma_3,\sigma_4}}\, \phi_{\sigma_1',\sigma_2',\sigma_3',\sigma_4'}\, \overline{\phi_{\sigma_1'',\sigma_2',\sigma_3',\sigma_4'}}\,\nn
	 & \times \phi_{\sigma_1'',\sigma_2'',\sigma_3'',\sigma_4''}\, \overline{\phi_{\sigma_1,\sigma_2'',\sigma_3'',\sigma_4''}} \,,
\ea
and
\ba
	S_{int}^{6;2}[\phi] = \lambda_{6;2}\sum_{\vec{p},\vec{p}',\vec{p}'' \in \mb{Z}^4} \sum_{\sigma\in\Sigma_4} & \phi_{\sigma_1,\sigma_2,\sigma_3,\sigma_4}\, \overline{\phi_{\sigma_1',\sigma_2',\sigma_3',\sigma_4}}\, \phi_{\sigma_1',\sigma_2',\sigma_3',\sigma_4'}\, \overline{\phi_{\sigma_1'',\sigma_2,\sigma_3,\sigma_4'}}\, \nn
	& \times \phi_{\sigma_1'',\sigma_2'',\sigma_3'',\sigma_4''}\, \overline{\phi_{\sigma_1,\sigma_2'',\sigma_3'',\sigma_4''}} \,,
\ea
where we denoted for simplicity $\phi(p_1,p_2,p_3,p_4)=:\phi_{1,2,3,4}$, $\phi(p_1',p_2,p_3,p_4)=:\phi_{1',2,3,4}$, $\phi(p_1,p_2',p_3,p_4)=:\phi_{2',1,3,4}$, etc., and we sum over all permutations $\sigma\in\Sigma_4$ of the four color indices. These interactions lead exactly to the identifications of the field variables represented by the stranded Feynman graphs of the type in Fig.~\ref{fig:BGRvertices6} modulo permutations of the ordering of the strands.

In addition to the above 6-valent interaction vertices, there are also two 4-valent vertices in this model. The first one is depicted in Fig.~\ref{fig:BGRvertex41}, while the second one is a disconnected combination of two propagators, which arises from the vertex $V_{6;2}$ through the insertion of a propagator, as in Fig.~\ref{fig:BGRvertex42}.
The corresponding interaction terms of the action functional read
\ba
	S_{int}^{4;1}[\phi] = \lambda_{4;1} \sum_{\vec{p},\vec{p}' \in \mb{Z}^4} \sum_{\sigma\in\Sigma_4} \phi_{\sigma_1,\sigma_2,\sigma_3,\sigma_4}\, \overline{\phi_{\sigma_1',\sigma_2,\sigma_3,\sigma_4}}\, \phi_{\sigma_1',\sigma_2',\sigma_3',\sigma_4'}\, \overline{\phi_{\sigma_1,\sigma_2',\sigma_3',\sigma_4'}} \,,
\ea
and
\ba
	S_{int}^{4;2}[\phi] = \lambda_{4;2} \sum_{\vec{p},\vec{p}' \in \mb{Z}^4} \sum_{\sigma\in\Sigma_4} \phi_{\sigma_1,\sigma_2,\sigma_3,\sigma_4}\, \overline{\phi_{\sigma_1,\sigma_2,\sigma_3,\sigma_4}}\, \phi_{\sigma_1',\sigma_2',\sigma_3',\sigma_4'}\, \overline{\phi_{\sigma_1',\sigma_2',\sigma_3',\sigma_4'}} \,.
\ea
\begin{figure}
\centering
\def\svgwidth{0.45\columnwidth}
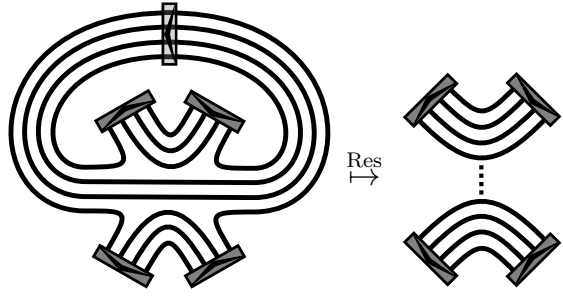
\caption{\label{fig:BGRvertex42}Vertex $V_{4;2}$ of the BGR model obtained through the contraction of a self-loop from the vertex $V_{6;2}$.}
\end{figure}
Notice the peculiar disconnected structure of the vertex $V_{4;2}$. This 
can be interpreted geometrically as a four-sphere with two holes of the form of four-simplices.

The Feynman amplitude associated to a particular Feynman graph is constructed as follows:
\begin{enumerate}
	\item Let $\vec{p}_e\in\mb{Z}^4$ be the momenta labels on the lines $e\in\Gamma^{[1]}$ of the Feynman graph $\Gamma$.  We 
	construct the product of free propagators
	\ba
		\prod_{e\in\Gamma^{[1]}} \frac{1}{|\vec{p}_e|^2 + m^2} \,.
	\ea
	\item For each vertex $v_t\in\Gamma^{[0]}$, multiply by the coupling constant $\lambda_t$, where $t$ labels the type of the vertex, and impose identifications of strands (momentum variables) by multiplying the above product by
	\ba
		\prod_{v\in\Gamma^{[0]}} \prod_k \delta(c^{t(v)}_k(\vec{p}_l)) \,,
	\ea
	where $\{c_{t(v)}^k(\vec{p}_l)=0\}$ is the set of constraints imposed on the incoming momenta by the vertex type $t(v)$ of vertex $v\in\Gamma^{[0]}$. Explicitly, these are given by the following
	\begin{itemize}
		\item $\prod_k \delta(c^{6;1}_k(\vec{p}_l)) = \delta_{p_1n_1}\delta_{p_2q_2}\delta_{p_3q_3}\delta_{p_4q_4}\delta_{q_1k_1}\delta_{k_2l_2}\delta_{k_3l_3}\delta_{k_4l_4}\delta_{l_1m_1}\delta_{m_2n_2}\delta_{m_3n_3}\delta_{m_4n_4}$,
		\item $\prod_k \delta(c^{6;2}_k(\vec{p}_l)) = \delta_{p_1n_1}\delta_{p_2l_2}\delta_{p_3l_3}\delta_{p_4q_4}\delta_{q_1k_1}\delta_{q_2k_2}\delta_{q_3k_3}\delta_{k_4l_4}\delta_{l_1m_1}\delta_{m_2n_2}\delta_{m_3n_3}\delta_{m_4n_4}$,
		\item $\prod_k \delta(c^{4;1}_k(\vec{p}_l)) = \delta_{p_1l_1}\delta_{p_2q_2}\delta_{p_3q_3}\delta_{p_4q_4}\delta_{q_1k_1}\delta_{k_2l_2}\delta_{k_3l_3}\delta_{k_4l_4}$,
		\item $\prod_k \delta(c^{4;2}_k(\vec{p}_l)) = \delta_{p_1q_1}\delta_{p_2q_2}\delta_{p_3q_3}\delta_{p_4q_4}\delta_{k_1l_1}\delta_{k_2l_2}\delta_{k_3l_3}\delta_{k_4l_4}$,
	\end{itemize}
	where $\vec{p},\vec{k},\vec{m}$ are incoming and $\vec{q},\vec{l},\vec{n}$ outgoing momenta. Notice that the vertex functions are not symmetric with respect to all permutations of the edges, so in addition to specifying the type of vertices, one must also specify the orientations. (This is naturally taken care of in the stranded representation.)
	\item Sum over $\vec{p}_e\in\mb{Z}^4$ associated to the internal edges $e \in \Gamma^{[1]}_{int}$ of the graph.
\end{enumerate}
The quantity that results from applying these rules to a Feynman graph is the Feynman amplitude of that graph.

Since the momenta are conserved by a simple identification of variables in the propagator and the vertices, as mentioned above, the momentum vector components may be associated to the individual strands $s\in \Gamma^{s}$ of a Feynman graph in the stranded representation. We emphasize that this combinatorial identification of field variables is, in fact, what allows for the stranded representation in the first place. As a result, the Feynman amplitude of a stranded graph consists of a product over Kronecker deltas identifying the external momentum vector components through the strands, multiplied by the product over internal propagators of the graph, which depend on the momenta associated to the strands that pass through them, summed over the momenta associated to internal (looped) strands. For example, for the Feynman graph in Fig.\ \ref{fig:BGRgraph1} the amplitude reads
\begin{eqnarray}
	\phi(\Gamma_{\mt{Fig.\ \ref{fig:BGRgraph1}}}) :=& \lambda_{6;1}\lambda_{6;2}\ \delta_{k_1 q_1} \delta_{k_2 l_2} \delta_{k_3 l_3} \delta_{k_4 l_4} \delta_{l_1 m_1} \delta_{m_2 n_2} \delta_{m_3 n_3} \delta_{m_4 n_4} \delta_{n_1 p_1} \delta_{p_2 q_2} \delta_{p_3 q_3} \delta_{p_4 q_4} \qquad\nonumber\\
	& \times \sum_{\substack{r_2,r_3,\\r_4\in\mb{Z}}} \left(\frac{1}{k_1^2 + r_2^2 + r_3^2 + r_4^2 + m^2}\right) \sum_{s_4\in\mb{Z}} \left(\frac{1}{l_1^2 + p_2^2 + p_3^2 + s_4^2 + m^2}\right) \left(\frac{1}{k_1^2 + p_2^2 + p_3^2 + s_4^2 + m^2}\right),
\end{eqnarray}
where $\vec{k},\vec{l},\vec{m},\vec{n},\vec{p},\vec{q}\in{\mathbb Z}^4$ 
are the momenta associated to the external edges
and $r_i$ ($i=2,3,4$) and $s_4$ are the momentum vector components associated to the internal strands. 
We observe that for large absolute values of the internal momenta $r_2,r_3,r_4,s_4$ the sums behave asymptotically as
\ba
	\phi(\Gamma_{\mt{Fig.\ \ref{fig:BGRgraph1}}}) \propto \sum_{\substack{r_2,r_3,\\r_4\in\mb{Z}}} \frac{1}{r_2^2 + r_3^2 + r_4^2}\ \sum_{s_4\in\mb{Z}} \frac{1}{s_4^4} \approx \int_{\mb{R}^3} \frac{\dd^3r}{r^2} \int_{\mb{R}} \frac{\dd s}{s^4} \propto \Lambda \times  \Lambda^{-3} = \Lambda^{-2}\,,
\ea
where $\Lambda$ is a cut-off on the momentum summation. Here, the first factor leading to a linear divergence in the limit $\Lambda\rightarrow\infty$ corresponds to the three strand loops in the upper left part of the graph and the second convergent factor corresponds to the single strand loop in the middle. Although the first factor leads to a subdivergence, the superficial divergence degree of the total graph $\Gamma_{\mt{Fig.\ \ref{fig:BGRgraph1}}}$ is $-2$, and therefore the graph is superficially convergent.

Similarly, the Feynman amplitude for the graph in Fig.\ \ref{fig:BGRnonstrandedgraph} reads
\begin{eqnarray}
	\phi(\Gamma_{\mt{Fig.\ \ref{fig:BGRnonstrandedgraph}}}) := & \lambda_{6;1}\lambda_{6;2}\ \delta_{k_1 l_1} \delta_{k_2 l_2} \delta_{k_3 l_3} \delta_{k_4 l_4} \delta_{m_1 n_1} \delta_{m_2 n_2} \delta_{m_3 n_3} \delta_{m_4 q_4} \delta_{n_4 p_4} \delta_{p_1 q_1} \delta_{p_2 q_2} \delta_{p_3 q_3} \nonumber\\
	 & \times \sum_{\substack{r_2,r_3,\\r_4\in\mb{Z}}} \left(\frac{1}{k_1^2 + r_2^2 + r_3^2 + r_4^2 + m^2}\right) \left(\frac{1}{l_1^2 + p_2^2 + p_3^2 + m_4^2 + m^2}\right) \left(\frac{1}{k_1^2 + p_2^2 + p_3^2 + m_4^2 + m^2}\right),
\end{eqnarray}
with the same notation for the momenta as in Fig.\ \ref{fig:BGRgraph1}. The difference to the previous is that now there are only three internal strands corresponding to $r_i$, $i=2,3,4$. The superficial divergence degree of this graph is $-3$ despite the subdivergence, since the propagators contribute negatively to the power counting. In the following we state the power counting more explicitly.

The Feynman amplitude of any other BGR tensor graph may be computed in an analogous way. We leave as an exercise to the interested reader the calculation of amplitudes for the rest of the graphs of this section.

We may also draw the Feynman graphs as ordinary graphs, but then the combinatorics of the internal structure must be taken into account by additional labeling of vertices. We have illustrated the correspondence between the ordinary and the stranded representation of Feynman graphs of the BGR model in Fig.~\ref{fig:BGRnonstrandedgraph}.
\begin{figure}
\centering
\def\svgwidth{0.8\columnwidth}
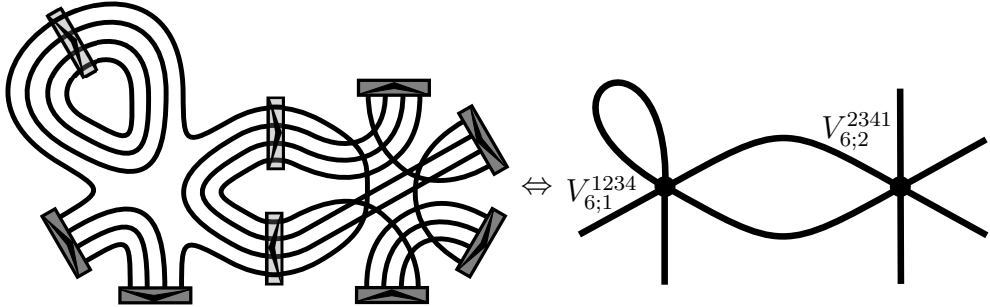
\caption{\label{fig:BGRnonstrandedgraph}An example of the correspondence between stranded and non-stranded representation of a Feynman graph of the BGR model.}
\end{figure}
The vertices are labeled, for example, as $V_{6;1}^{1234}$, where the lower index indicates the vertex type and the upper indices indicate the permutation of field variables. Since the vertex is not symmetric under permutations of edges, the edges meeting at a vertex must be distinguished. This can be accomplished, for example, by placing the vertex label between the first and the last edge, where the edges are ordered in a right-handed orientation around the vertex with respect to the normal pointing upwards from the surface, on which the graph is drawn.

Let us give the following definition:
\begin{definition}
	A \emph{face} of a stranded Feynman graph is a strand that forms a loop.
\end{definition}

Note that a face of a stranded graph is dual to the corresponding triangle of the tetrahedra of the four-dimensional triangulation.

The integrations over face momenta may give rise to divergencies, as we have already observed. In order for a strand to form a loop, the non-stranded representation must clearly have a loop as well. However, a loop in the non-stranded representation does not imply the existence of a face as can be seen, e.g., in Fig.~\ref{fig:BGRgraphexample3}.
\begin{figure}
\centering
\def\svgwidth{0.8\columnwidth}
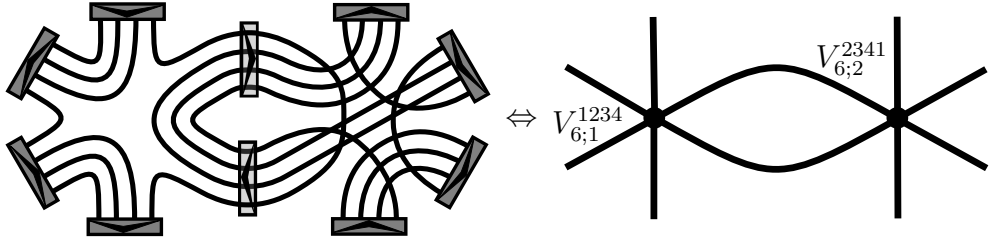
\caption{\label{fig:BGRgraphexample3}An example of a graph without a face but with a loop in the non-stranded representation.}
\end{figure}
Moreover, a face may extend around several loops of the non-stranded graph, as in Fig.~\ref{fig:BGRgraphexample4}.
\begin{figure}
\centering
\def\svgwidth{0.8\columnwidth}
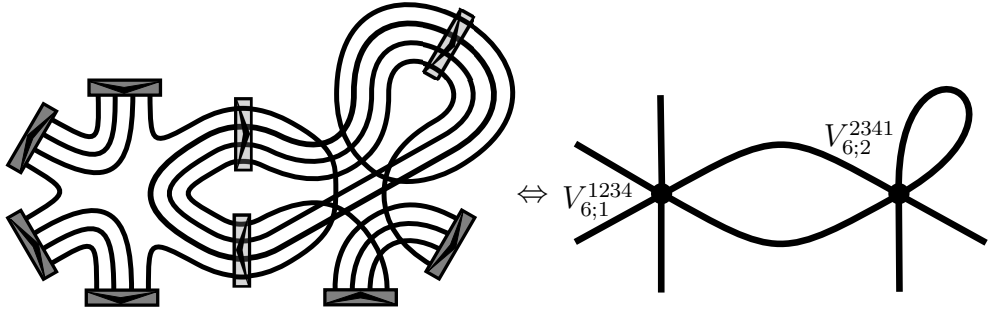
\caption{\label{fig:BGRgraphexample4}An example of a graph with faces, which extend around several loops of the non-stranded representation.}
\end{figure}
Therefore, the faces can be highly non-local objects on the graph. Fortunately, the power counting restricts the non-locality of divergent faces significantly, as we will see in the following. 
Notice also that faces cannot extend over bridges of the non-stranded Feynman graph, since due to the coloring a single strand can pass the same propagator only once. Therefore, 1PI Feynman graphs are still sufficient to resolve the divergence structure of Feynman amplitudes of the BGR model.

One has:
\begin{lemma}[\cite{BR}, Lemma 2]
	A superficial divergence degree for the BGR model is given by $\omega(\Gamma) = F_\Gamma - 2 P_\Gamma$, where $F_\Gamma$ and $P_\Gamma$ are the numbers of faces and propagators in $\Gamma$, respectively.
\end{lemma}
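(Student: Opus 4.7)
The plan is to evaluate the scaling of the Feynman amplitude under a uniform cutoff $\Lambda$ placed on the internal strand momenta, and read off the superficial divergence degree from the resulting power of $\Lambda$.

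First I would exploit the stranded representation to separate the sum structure of the amplitude from its analytic content. After using the Kronecker deltas attached to vertices and propagators to identify the four components of each line momentum $\vec{p}_e \in \mb{Z}^4$ with the corresponding strand labels, every face (i.e., closed strand) contributes exactly one free summation variable $p_f \in \mb{Z}$, while any strand that is not closed is fixed in terms of the external momenta. The amplitude therefore reduces to a sum over the $F_\Gamma$ face variables of a product over $e \in \Gamma^{[1]}_{\mt{int}}$ of propagator kernels $\frac{1}{|\vec{p}_e|^2 + m^2}$, where each component of $\vec{p}_e$ is either an external datum or one of the $p_f$.

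Next I would estimate the scaling under the uniform cutoff $|p_f| \le \Lambda$. Each of the $F_\Gamma$ summations contributes a factor of order $\Lambda$, and each of the $P_\Gamma$ internal propagators is bounded by $1/\big(\sum_{s \ni e} p_{f(s)}^2 + m^2\big)$, which scales as $\Lambda^{-2}$ under the uniform rescaling of all face variables. Multiplying these contributions gives the bound $|\phi_\Lambda(\Gamma)| \le C\,\Lambda^{F_\Gamma - 2 P_\Gamma}$ for a combinatorial constant $C$. This establishes $F_\Gamma - 2 P_\Gamma$ as an upper bound for the leading power of $\Lambda$, and the bound is attained on the configuration in which all face momenta grow simultaneously. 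Hence $\omega(\Gamma) = F_\Gamma - 2 P_\Gamma$ satisfies the defining property $\omega(\gamma) < 0 \Rightarrow \lim_{\Lambda\to\infty}\phi_\Lambda(\gamma) < \infty$ required of a superficial divergence degree.

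The main obstacle is making this simultaneous scaling estimate rigorous, because a single propagator may depend on as many as four distinct face momenta, and the naive bound $1/(\sum_i p_i^2 + m^2) \sim \Lambda^{-2}$ only works cleanly when those momenta are comparable. The standard remedy, used in the original derivation of \cite{BR}, is a multi-scale slicing: decompose each propagator as a sum over dyadic momentum shells $2^j \le |\vec{p}| < 2^{j+1}$, assign an independent scale to every face, and show that on each multi-scale configuration the power of $\Lambda$ is bounded by $F_\Gamma - 2 P_\Gamma$ uniformly, with the sum over scales producing only logarithmic corrections that do not alter the overall divergence degree.
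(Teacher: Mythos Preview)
The paper does not supply its own proof of this lemma: it is quoted verbatim as Lemma~2 of \cite{BR} and used as an input to the subsequent analysis. So there is no in-paper argument to compare against.

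Your sketch is the standard heuristic derivation and matches what \cite{BR} does at the level of ideas: after resolving all vertex and propagator Kronecker deltas in the stranded picture, the only free summations are the $F_\Gamma$ face momenta, each worth a factor $\Lambda$, while each of the $P_\Gamma$ internal propagators supplies a $\Lambda^{-2}$, giving the claimed $\omega(\Gamma)=F_\Gamma-2P_\Gamma$. You are also right that the naive global rescaling argument is only heuristic, because a propagator can carry up to four distinct face momenta and the bound $1/(\sum_i p_i^2+m^2)\sim\Lambda^{-2}$ is not uniform over all relative scales; the rigorous statement in \cite{BR} is indeed obtained via multi-scale slicing, exactly as you describe in your last paragraph.

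One small inaccuracy: the implication you write at the end, ``$\omega(\gamma)<0 \Rightarrow \lim_{\Lambda\to\infty}\phi_\Lambda(\gamma)<\infty$'', is neither what your upper bound proves by itself (a superficially convergent graph can still diverge through subdivergences) nor the condition the paper uses in its Definition of a superficial divergence degree, which is the contrapositive direction (convergence of $\Gamma$ implies $\omega(\gamma)<0$ for all $\gamma\subset\Gamma$). What your estimate actually establishes is the \emph{overall} scaling bound $\Lambda^{\omega(\Gamma)}$; the full convergence statement only follows once one combines this with the multi-scale analysis applied to every subgraph, which is the content of the Weinberg-type theorem in \cite{BR}.
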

Thus, we see that not all faces lead to divergencies, but there is a balance between the number of propagators and the number of strands. It was further proved that the superficial divergence degree $\omega$ for connected graphs, where $V_{4;2}$ is considered disconnected, can be written in terms of the combinatorial and topological structure of the stranded Feynman graphs as follows.
\begin{definition}[\cite{BR}, Definition 1]
	Let $\Gamma$ be a Feynman graph of the BGR model. Then
	\begin{itemize}
		\item the \emph{colored extension} $\Gamma_{col}$ of $\Gamma$ is the corresponding unique stranded graph obtained by replacing the vertices of $\Gamma$ with the corresponding graphs of the simplicial colored model from Fig.~\ref{fig:BGRvertices6} and Fig.~\ref{fig:BGRvertex41}.
		\item a \emph{jacket} $J$ of $\Gamma_{col}$ is a ribbon graph obtained from $\Gamma_{col}$ by removing a subset of the strands. Specifically, a jacket $J$ is determined by a permutation $(0abcd)$ of colors (up to cyclic permutations), and the strands that are included in $J$ are the ones that connect the tetrahedra of colors $(0a),(ab),(bc),(cd),(d0)$, where $a,b,c,d \in \{R,G,B,Y\}$ are all different. Here, we have chosen 0 to represent the grey color, while the other colors are labeled by their initials.
		\item the \emph{pinched jacket} $\tilde{J}$ is the vacuum ribbon graph obtained from the jacket $J$ by closing all external ribbon edges of $J$. (See Fig.~\ref{fig:jacket} for an example.)
\begin{figure}
\centering
\def\svgwidth{0.95\columnwidth}
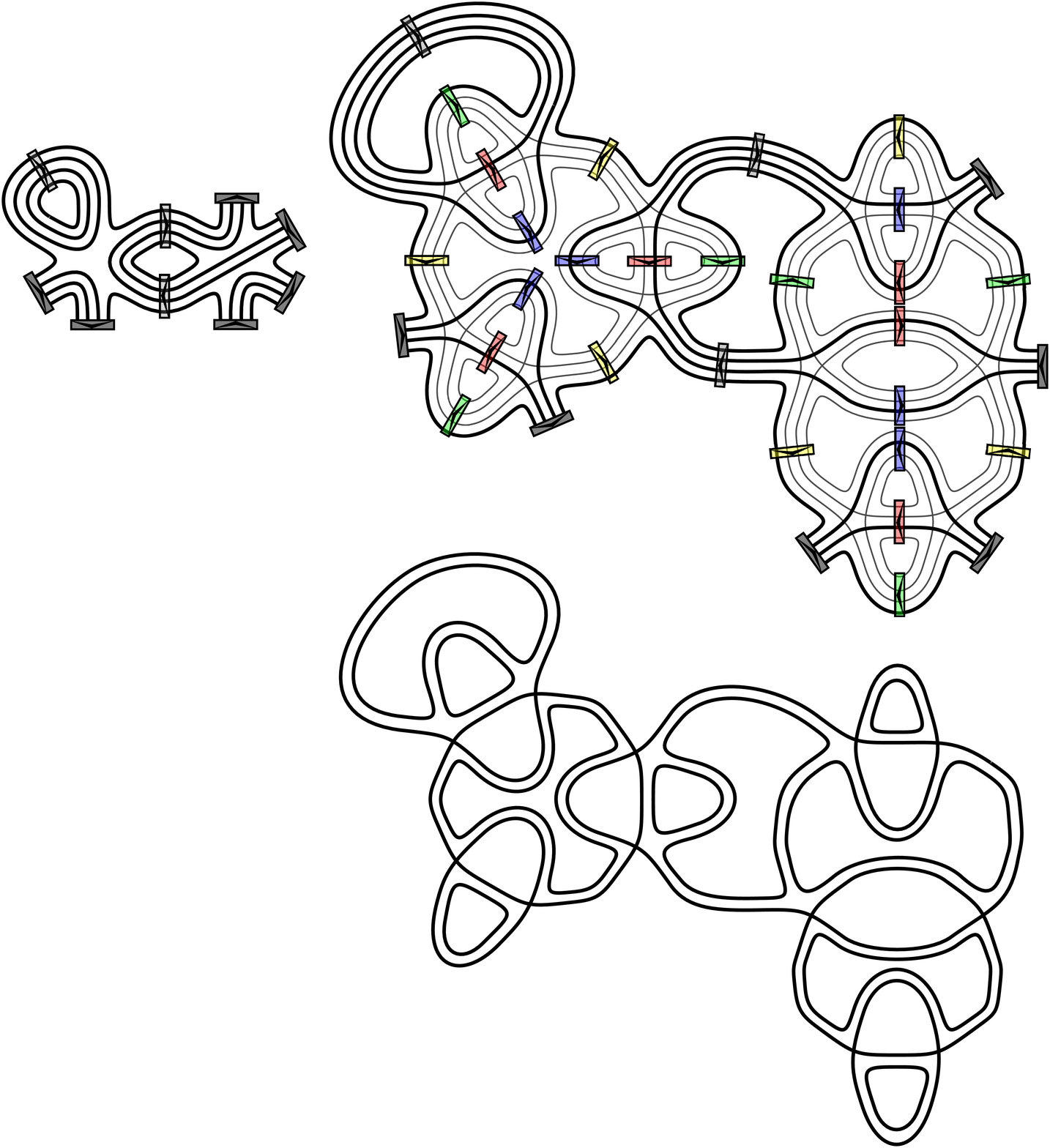
\caption{\label{fig:jacket}An example of a pinched jacket.}
\end{figure}
		\item the \emph{boundary graph} $\prt\Gamma$ of $\Gamma$ is the 3-dimensional tensor graph, which corresponds to the boundary of $\Gamma$. It is obtained by first removing all internal faces of $\Gamma$, closing the external edges by connecting their strands to a vertex per external edge, replacing the strands by bundles of three strands, and finally replacing the vertices of external edges by tensorial simplicial vertices. (See Fig.~\ref{fig:bound} for an example.)
\begin{figure}
\centering
\def\svgwidth{0.7\columnwidth}
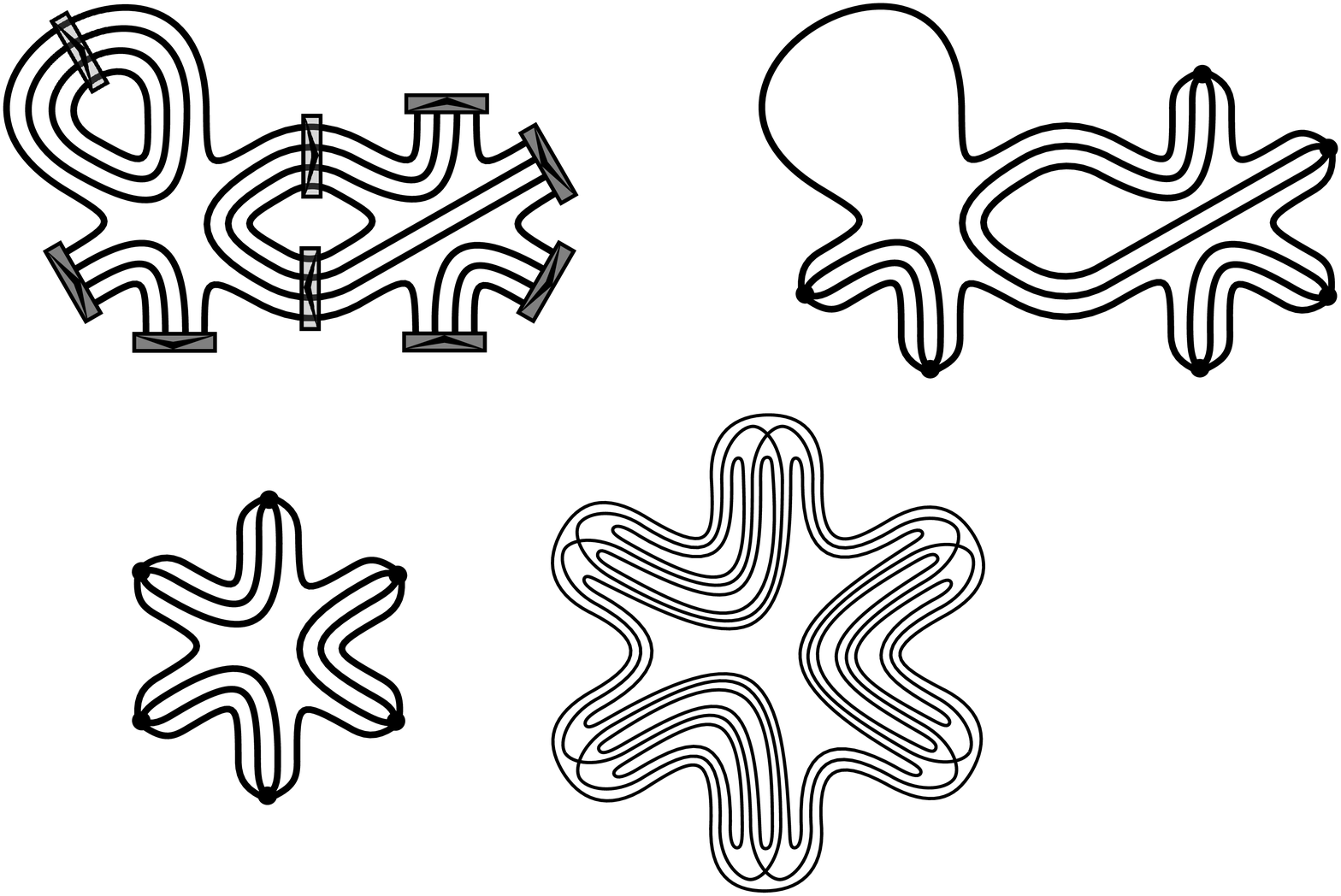
\caption{\label{fig:bound}An example of a boundary graph.}
\end{figure}
		\item the \emph{boundary jacket} $J_\prt$ is a jacket of the boundary graph $\prt\Gamma$.
	\end{itemize}
\end{definition}
\noindent
(See \cite{BR} and references therein for further details.) Ben Geloun and Rivasseau proved the following result:
\begin{theorem}
The superficial divergence degree $\omega_{BGR}$ of a connected Feynman graph $\Gamma$ can be written as
\ba
	\omega_{BGR}(\Gamma) = -\frac{1}{3}\left[ \sum_J g_{\tilde{J}} - \sum_{J_\prt} g_{J_\prt} \right] - (C_\prt -1) - V_4 -2V''_2 - \frac{1}{2}(N_{ext} - 6) \,,
\ea
where $g_{J}$ denotes the genus of the ribbon graph $J$, $C_\prt$ is the number of disconnected components of $\prt\Gamma$, $V_4$ is the number of vertices of type $V_{4;1}$, 
$V''_2/2$ is the number of vertices of type $V_{4;2}$, and $N_{ext}$ is the number of external edges of $\Gamma$.
\end{theorem}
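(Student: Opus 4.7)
The starting point is the power-counting Lemma stated just above, $\omega_{BGR}(\Gamma) = F_\Gamma - 2 P_\Gamma$, so the entire task is to re-express the face count $F_\Gamma$ in terms of the stated topological invariants of the pinched jackets and boundary jackets. The principal tool is the Euler characteristic formula for ribbon graphs: a ribbon graph with $V$ vertices, $E$ edges, $F$ faces, $C$ connected components and total genus $g$ satisfies $V - E + F = 2C - 2g$. My plan is to apply this identity to every pinched jacket $\tilde J$ of the colored extension $\Gamma_{col}$, and separately to every jacket $J_\partial$ of the boundary graph $\partial\Gamma$, and then combine the resulting sums so that the internal (colored) contributions cancel and only the BGR face count of $\Gamma$ remains.

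For the bulk sum, every jacket shares the same vertex skeleton and the same propagator skeleton as $\Gamma_{col}$ (the pinching only acts on external half-edges), so $\sum_J V_{\tilde J}$ and $\sum_J E_{\tilde J}$ are fixed linear combinations of $V_{6;1}, V_{6;2}, V_4, V''_2$, $P_\Gamma$ and $N_{ext}$. Moreover, every color-pair face of $\Gamma_{col}$ is visited by the same number of jackets, and among those faces only the ones involving the grey color $0$ coincide with the BGR faces of $\Gamma$ itself; the remaining faces are internal to the colored expansion of a vertex and can be counted exactly from the combinatorics of the $V_{6;i}$ and $V_{4;1}$ building blocks in Figures \ref{fig:BGRvertices6} and \ref{fig:BGRvertex41}. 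Running the same Euler computation on $\partial\Gamma$ gives a parallel identity whose vertex and edge counts are rigidly determined by the external structure of $\Gamma$ and whose right-hand side produces $2C_\partial - 2\sum g_{J_\partial}$. Subtracting the boundary identity from the bulk identity makes the colored internal faces cancel, leaves $F_\Gamma$ on the left, and produces the combination $-\tfrac{1}{3}\left[\sum_J g_{\tilde J} - \sum_{J_\partial} g_{J_\partial}\right]$ on the right, the $\tfrac{1}{3}$ being the inverse of the multiplicity with which each BGR face is visited by the jackets in rank $D=4$. The term $-(C_\partial - 1)$ appears as the $2C_\partial$ contribution of the boundary Euler identity once the connected baseline is absorbed into the vertex counts. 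Finally, the handshake identity $6(V_{6;1} + V_{6;2}) + 4 V_4 + 2 V''_2 = 2 P_\Gamma + N_{ext}$ is used to eliminate $P_\Gamma$ in favour of $\tfrac{1}{2}(N_{ext} - 6)$ and the vertex counts, producing the normalisation seen in the statement.

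The hard part will be the careful treatment of the two special vertex types $V_{4;1}$ and $V_{4;2}$. The vertex $V_{4;1}$ has one fewer internal colored edge than $V_{6;1}$, so it perturbs the face and edge counts of every jacket that contains it; tracking this perturbation through the Euler identity for each of the twelve jackets is what generates the correction $-V_4$. The vertex $V_{4;2}$ is disconnected at the strand level, being topologically a four-sphere with two four-simplex-shaped holes (as noted immediately after Figure \ref{fig:BGRvertex42}), so each occurrence increases the number of connected components of every jacket by one and forces an additional shift in the $2C$ term of the Euler formula, ultimately producing the coefficient $-2 V''_2$. Once these contributions are bookkept consistently between bulk and boundary and the handshake identity is applied, the two Euler sums combine with the preceding Lemma to give precisely the stated formula for $\omega_{BGR}(\Gamma)$.
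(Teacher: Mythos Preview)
The paper does not actually prove this theorem: it is quoted verbatim as a result of Ben Geloun and Rivasseau, introduced by ``Ben Geloun and Rivasseau proved the following result'' and followed immediately by the classification table, with no proof environment in between. So there is no ``paper's own proof'' to compare your proposal against; the argument lives entirely in reference \cite{BR}.

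That said, your outline is precisely the strategy used in \cite{BR}: start from $\omega = F - 2P$, apply the Euler relation $V - E + F = 2C - 2g$ to each pinched jacket $\tilde J$ of the colored extension and to each jacket $J_\partial$ of $\partial\Gamma$, exploit the fact that every bicolored face of $\Gamma_{col}$ is hit by the same fixed number of jackets (which is what produces the $\tfrac{1}{3}$), and then eliminate $P_\Gamma$ via the handshake relation among the vertex valences. Your identification of the origin of the $-V_4$ term (the deficit in the colored expansion of $V_{4;1}$ relative to $V_{6;1}$) and of the $-2V''_2$ term (the disconnectedness of $V_{4;2}$ feeding into the $2C$ term) is also the correct intuition. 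What remains genuinely laborious, and what your sketch only gestures at, is the exact bookkeeping: counting the internal colored faces contributed by each vertex type across all twelve jackets, and checking that the external-face contributions of the pinched jackets match the face count of $\partial\Gamma$ so that the subtraction cleanly isolates $F_\Gamma$. None of this is conceptually new relative to your plan, but it is where all the work is, and a full proof would have to display those counts explicitly.
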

Moreover, they classified all superficially divergent Feynman graphs in Table \ref{tab:supdivclass}, which is the departure point for our subsequent analysis.
\begin{table}[ht!]
\begin{center}
\begin{tabular}{c||c|c|c|c|c||c|c}
	class & $N_{ext}$ & $V_4$ & $\sum_{J_\prt} g_{J_\prt}$ & $C_{\prt} -1$ & $\sum_{\tilde{J}} g_{\tilde{J}}$ & $\omega_{BGR}$ & $\res$\\ \hline\hline
	$6_1$ & 6 & 0 & 0 & 0 & 0 & 0 & $V_{6;i}$ \\\hline
	$4_1$ & 4 & 0 & 0 & 0 & 0 & 1 & $V_{4;1}^*$ \\
	$4_2$ & 4 & 1 & 0 & 0 & 0 & 0 & $V_{4;1}$ \\
	$4_3$ & 4 & 0 & 0 & 1 & 0 & 0 & $V_{4;2}$ \\\hline
	$2_1$ & 2 & 0 & 0 & 0 & 0 & 2 & $V_{2;1}$ \\
	$2_2$ & 2 & 1 & 0 & 0 & 0 & 1 & $V_{2;2}$ \\
	$2_3$ & 2 & 2 & 0 & 0 & 0 & 0 & $V_{2;3}$ \\
	$2_4$ & 2 & 0 & 0 & 0 & 6 & 0 & $V_{2;3}$ \\\hline\hline
\end{tabular}
\end{center}
\caption{\label{tab:supdivclass}Classification and residues of superficially divergent Feynman graphs of the BGR model.}
\end{table}

Notice that all superficially divergent graphs have a combinatorial structure that corresponds to one of the vertices of the original model, which facilitates perturbative renormalizability. Ben Geloun and Rivasseau \cite{BR} showed that the BGR model is perturbatively renormalizable. 

\newpage
\section{Hopf algebraic description of the combinatorics of the renormalizability 
of the BGR model}
\label{sec:CKforBGR}
The formal definitions for local QFT models from Subsection \ref{subsec:HopfFeynman} apply also to the case of the BGR model with four important exceptions:

(i) The constraints on the external parameters of a Feynman graph $\Gamma$ are not a simple momentum conservation anymore, but a combinatorial identification of variables represented by the stranded structure of the Feynman diagrams. Let us give the following definition:
\begin{definition}
	A Feynman graph $\Gamma$ of a QFT model is equipped with a set of \emph{external constraints} $\{c^\Gamma_k(\vec{p}_i) = 0\}$ on the external momenta. We denote the space of the external momenta of $\Gamma$ by $P_\Gamma \equiv \{ (\vec{p}_i) \in \times_i P_\Gamma^i : c^\Gamma_k(\vec{p}_i) =0\}$, where $\vec{p}_i\in P_i$. The Feynman amplitudes are then generalized functions on $P_\Gamma$, as before.
\end{definition}
\begin{remark}
Notice that this generalizes the case of local QFT, where the external constraints impose the momentum conservation. In contrast, in tensor field theories the interactions are combinatorially non-local in the sense that the boundary variables are identified according to tensorial invariance. Thus, the notion of tensorial invariance in TFT substitutes for the notion of locality in local QFT. A gluing of tensor invariant graphs always yields another tensor invariant graph.
\end{remark}
Accordingly, we also need to generalize the notion of residue for tensorial Feynman graphs.
\begin{definition}
A residue $\res(\Gamma)$ of a tensorial Feynman graph $\Gamma\in\mc{G}_{1\mt{PI}}$ of the BGR model is the graph obtained, in the non-stranded representation, by shrinking all internal lines of $\Gamma$ to a point, and retaining the external constraints and structure of the graph. In the stranded representation this corresponds to the removal of all internal faces and internal propagators of $\Gamma$.
\end{definition}

(ii) There are vertices of valence four and six. By inserting a self-loop into a 6-valent vertex $V_{6;1}$, we obtain a self-loop Feynman graph with four external edges and divergence degree 1. This compels us to introduce a new 4-valent counter-term vertex $V_{4;1}^*$ that has the same structure as $V_{4;1}$ but a linear counter-term attached. This may be illustrated as in Fig.~\ref{fig:BGRgraph4pointexample}.

(iii) Similarly, arising from the contraction of superficially divergent subgraphs with two external edges, we introduce 2-valent counter-term vertices $V_{2;1}$, $V_{2;2}$ and $V_{2;3}$ with a quadratic, linear and logarithmic counter-term attached, respectively. This is very different, for example, from the $\phi^4$-model, for which we only have quadratically divergent 2-point functions. See Fig.~\ref{fig:BGRgraph2pointexample} for illustration. Of the 2-valent counter-term vertices, only the one with a quadratic divergence can be replaced with the original propagator, since only in this case the divergence degree of the graph is preserved.
\begin{figure}
\centering
\def\svgwidth{0.35\columnwidth}
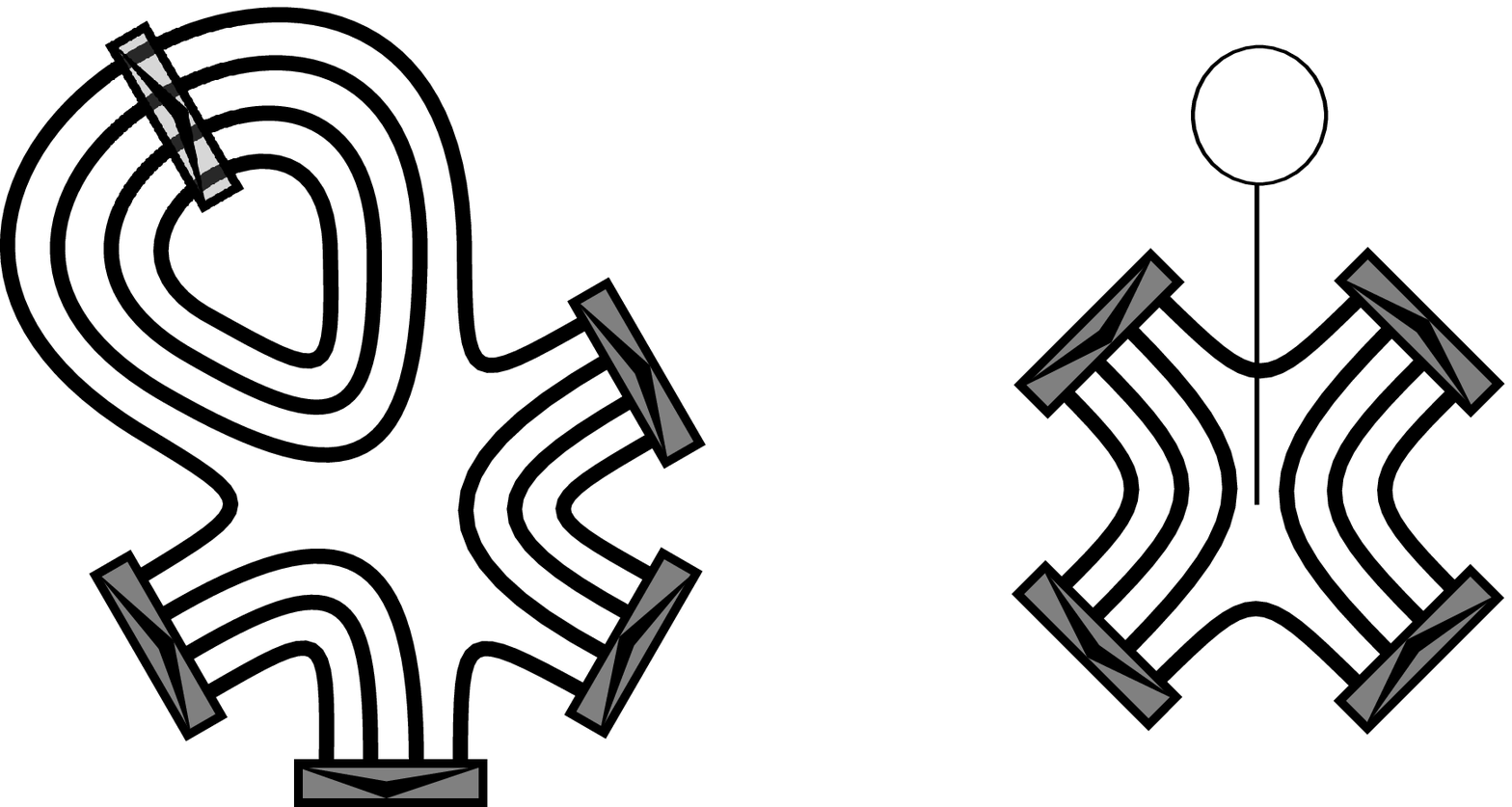
\caption{\label{fig:BGRgraph4pointexample}Illustration of the linear 4-valent counter-term vertex.}
\vspace{40pt}
\def\svgwidth{0.7\columnwidth}
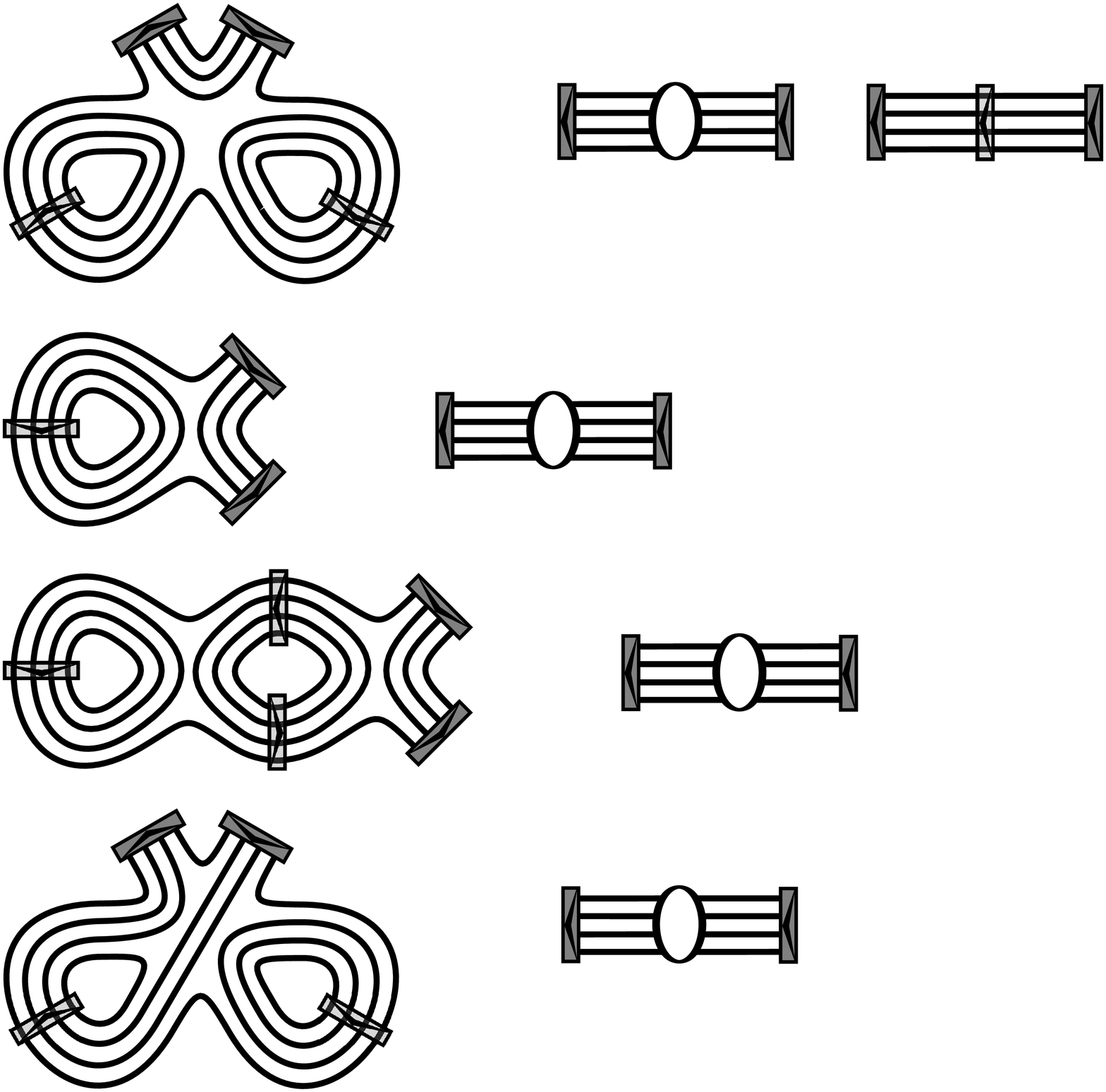
\caption{\label{fig:BGRgraph2pointexample}Some examples of the 2-valent counter-term as residues of 2-point functions from the classes $2_1$, $2_2$, $2_3$ and $2_4$, in the respective order. Only the quadratic counter-term vertex may be inserted into a propagator, since this conserves the divergence degree.}
\end{figure}

Let us make here a few more remarks on the renormalizability of the BGR model. 
When one does the Taylor development of the two-point function (see Lemmas $11$ and $12$ 
of \cite{BR} for the respective analytical details), the subleading divergence of the first graph in 
Fig. \ref{fig:BGRgraph2pointexample} is logarithmical and renormalizes the wave function.
Note that there is no linear subleading divergence. 
Analogously to the ``usual'' $\lambda \phi^4$ model (see Section $2$), there is no distinction between these two counterterms (quadratic, mass and logarithmic, wave function) at the level of the graph
drawing, but this is given by the external structure.
Moreover, there are no subleading divergencies for the rest of the graphs of Fig. \ref{fig:BGRgraph2pointexample}.

\begin{remark}
Note that a given two-point graph cannot have several counterterms of type $0$, $1$ and $2$ (see 
again Fig. \ref{fig:BGRgraph2pointexample}), since one can identify to which line of the Table 
\ref{tab:supdivclass}
 the respective tensor graph belongs (thanks to the numbers $V_4$, $V''_2$ and to the graph topology). If the graph belongs to the line $2_1$ then it has a counterterm of type $2$ (which means a quadratic counterterm of mass type and a logarithmic counterterm of wave function-type, see above), if it belongs to the line $2_2$ then it has a counterterm of type $1$ and so on.
\end{remark}

(iv) The superficial divergence degree does not only depend on the external data of a Feynman graph, such as the number of external edges, but also some internal combinatorial and topological information, in particular, the number of 4-valent vertices and the sum over the genera of pinched jackets. Of these, the first one is easy enough to understand, but to have a control over the second one requires a bit more work. To this aim, we prove the following:

\begin{lemma}
Let $\Gamma\in\mc{G}$, $\gamma\in\mc{G}_{sd}$ be Feynman graphs of the BGR model such that $\gamma\subseteq\Gamma$. We have for the sum over the genera of pinched jackets the relation
\ba
	\sum_J g_{\tilde{J}}(\Gamma) = \sum_J g_{\tilde{J}}(\gamma) + \sum_J g_{\tilde{J}}(\Gamma/\gamma) \,.
\ea
\end{lemma}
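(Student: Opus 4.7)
The plan is to prove the stronger statement that genus is additive \emph{jacket by jacket}, and then sum over jackets. Fix a jacket $J$ specified by a cyclic permutation $\sigma=(0abcd)$ of colors; the same permutation specifies a jacket of $\gamma$ and of $\Gamma/\gamma$ by restriction to the corresponding subset of colored strands, so it suffices to establish the per-jacket identity
\be\label{eq:perjacketPROP}
	g_{\tilde{J}_\sigma}(\Gamma) = g_{\tilde{J}_\sigma}(\gamma) + g_{\tilde{J}_\sigma}(\Gamma/\gamma),
\ee
and then sum the result over $\sigma$.

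The key geometric observation is that, since $\gamma\in\mc{G}_{sd}$, the residue $\res(\gamma)$ is one of the admissible vertex types listed in Table~\ref{tab:supdivclass}. Hence inside $\Gamma/\gamma$ the contracted subgraph appears as a single vertex whose stranded structure matches the external strand structure of $\gamma$. Passing to the pinched jacket, which is a closed orientable surface embedding the ribbon graph $\tilde{J}_\sigma$, a small neighbourhood of this residue vertex in $\tilde{J}_\sigma(\Gamma/\gamma)$ is an embedded open disk $D$. Then $\tilde{J}_\sigma(\Gamma)$ is recovered from $\tilde{J}_\sigma(\Gamma/\gamma)$ by excising $D$ and gluing in $\tilde{J}_\sigma(\gamma)$ with an open disk removed along the matching boundary: this is precisely an orientation-preserving connected sum of the two closed surfaces. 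Under connected sum, Euler characteristics satisfy
\be
	\chi(\tilde{J}_\sigma(\Gamma)) = \chi(\tilde{J}_\sigma(\gamma)) + \chi(\tilde{J}_\sigma(\Gamma/\gamma)) - 2,
\ee
and applying $\chi=2-2g$ yields (\ref{eq:perjacketPROP}).

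To turn this geometric picture into a rigorous argument I would verify the counts of vertices, ribbon edges and faces of the three pinched jackets combinatorially. The vertex and edge bookkeeping is straightforward: the external half-ribbons of $\gamma$ are in bijection with the half-ribbons at the residue vertex of $\Gamma/\gamma$, giving $V(\tilde{J}_\sigma(\Gamma)) = V(\tilde{J}_\sigma(\gamma)) + V(\tilde{J}_\sigma(\Gamma/\gamma)) - V(\res(\gamma))$ and a parallel identity for edges. The main obstacle is the face count, since pinching closes the external ribbons of each piece into additional faces, and one must show that the pinched faces of $\tilde{J}_\sigma(\gamma)$ are identified one-to-one with those at the residue vertex in $\tilde{J}_\sigma(\Gamma/\gamma)$ in such a way that precisely two faces collapse into a single face of $\tilde{J}_\sigma(\Gamma)$. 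Here the finiteness of the list of admissible residues in Table~\ref{tab:supdivclass} is essential: a case-by-case inspection over the residue types $V_{6;i}$, $V_{4;1}$, $V_{4;2}$ and the 2-valent counter-terms confirms this face identification, producing the required $-2$ shift in the Euler characteristic and hence (\ref{eq:perjacketPROP}). Summing over all jackets $J$ of $\Gamma_{col}$ then yields the lemma.
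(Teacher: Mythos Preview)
The per-jacket identity you aim for is in fact correct, and the endpoint of your argument --- a case-by-case check over the admissible residue types --- is also how the paper closes. The gap is in your second paragraph, where you invoke a connected sum.

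Two things go wrong there. First, in the colored extension the residue vertex is not a single ribbon vertex but an entire ribbon subgraph (cf.\ Figs.~\ref{fig:BGRvertices6}--\ref{fig:BGRvertex41}); a ``small neighbourhood'' of it in $\tilde{J}_\sigma(\Gamma/\gamma)$ is therefore not a priori a disk. Second, and more seriously, $\gamma$ is joined to the rest of $\Gamma$ through $N_{ext}$ external ribbon half-edges, so excising the $\res(\gamma)$ piece from $\tilde{J}_\sigma(\Gamma/\gamma)$ and gluing in the corresponding piece of $\tilde{J}_\sigma(\gamma)$ is a surgery along several boundary arcs, which after pinching close up into $F_{\tilde{J},ext}$ boundary circles, not one. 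The Euler-characteristic shift for gluing along $k$ circles is $-2k$, not $-2$; your ``precisely two faces collapse into one'' is the $k=1$ special case and does not hold in general for the BGR residues.

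The paper circumvents this by working purely combinatorially: it shows that each of the counts $F_{\tilde{J},int}$, $L_{\tilde{J}}$, $V_{\tilde{J}}$ satisfies $Q(\Gamma)-Q(\gamma)=Q(\Gamma/\gamma)-Q(\res(\gamma))$, and combining these via the Euler formula (together with $F_{\tilde{J},ext}(\Gamma)=F_{\tilde{J},ext}(\Gamma/\gamma)$, since contraction leaves the boundary untouched) yields
\[
g_{\tilde{J}}(\Gamma)=g_{\tilde{J}}(\gamma)+g_{\tilde{J}}(\Gamma/\gamma)-g_{\tilde{J}}(\res(\gamma)).
\]
The proof is then completed by the residue-by-residue verification that $g_{\tilde{J}}(\res(\gamma))=0$ for every jacket $J$ and every admissible residue --- exactly the check you propose. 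So your plan is salvageable: drop the connected-sum heuristic, keep your $V,L$ bookkeeping, handle the external-face count via $F_{\tilde{J},ext}(\gamma)=F_{\tilde{J},ext}(\res(\gamma))$, and you will arrive at the displayed identity above; the finite case check then finishes the job.
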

\begin{proof}
First note that the genus for a single pinched jacket may be expressed via the Euler characteristic formula as $g_{\tilde{J}} = -\frac{1}{2}(F_{\tilde{J}}-L_{\tilde{J}}+V_{\tilde{J}} -2C_{\tilde{J}})$, where $F_{\tilde{J}}$, $L_{\tilde{J}}$, $V_{\tilde{J}}$ and $C_{\tilde{J}}$ are the numbers of faces, lines, vertices and connected components of $\tilde{J}$. (For a disconnected jacket we define the genus to be given by the sum over the genera of its connected components.) One may further write $F_{\tilde{J}} = F_{\tilde{J},ext} + F_{\tilde{J},int}$, where $F_{\tilde{J},ext}$ is the number of faces formed by pinching the external edges of $J$, and $F_{\tilde{J},int}$ is the number of (internal) faces of $J$. Now, one may easily verify that each of $F_{\tilde{J},int}$, $L_{\tilde{J}}$ and $V_{\tilde{J}}$ separately satisfies the relation
\ba
	Q(\Gamma) - Q(\gamma) = Q(\Gamma/\gamma) - Q(\res(\gamma))
\ea
with respect to the contraction of subgraphs $\gamma\in\mc{G}_{sd}$ as defined for Feynman graphs of the BGR model. (Notice, in particular, that in general $F_{\tilde{J},int}(\res(\gamma))\neq 0$, $L_{\tilde{J}}(\res(\gamma))\neq 0$ and $V_{\tilde{J}}(\res(\gamma))\neq 0$, since the vertices of the BGR model correspond to extended subgraphs in the colored extension of the Feynman graphs, from which the jackets are derived.) For example, it is clear that both sides of
\ba
	Q(\Gamma) - Q(\Gamma/\gamma) = Q(\gamma) - Q(\res(\gamma))
\ea
represent the number of internal faces, lines or vertices, which are removed from $\tilde{J}$ by contracting $\gamma\subseteq\Gamma$, since by definition the subgraph contraction operation replaces $\gamma$ by $\res(\gamma)$ inside $\Gamma$, which does not affect the numbers of internal faces, lines or vertices that do not belong in $\gamma$. (Notice that some of the internal faces of $\Gamma$ not internal to $\gamma$ may still pass through $\gamma$, but are not affected by the contraction of $\gamma$.) Accordingly, we have
\ba
	Q(\Gamma) - Q(\res(\gamma)) = Q(\gamma) - Q(\res(\gamma)) + Q(\Gamma/\gamma) - Q(\res(\gamma)) \,,
\ea
so $G(\Gamma):= Q(\Gamma) - Q(\res(\gamma))$ gives a grading of jackets with respect to the contraction of $\gamma$ in the sense that $G(\Gamma)=G(\gamma)+G(\Gamma/\gamma)$. Moreover, if we have a set of gradings $G_i$, then it is easy to verify that $\sum_{i} (-1)^{n_i} G_i$ is also a grading for any choice of $n_i\in\{0,1\}$.

Now, note that we may write
\ba
	&\qquad g_{\tilde{J}}(\Gamma) + \frac{1}{2} \big(F_{\tilde{J},ext}(\Gamma) + F_{\tilde{J}}(\res(\gamma)) - L_{\tilde{J}}(\res(\gamma)) + V_{\tilde{J}}(\res(\gamma)) - 2C_{\tilde{J}}(\res(\gamma))\big) \nn
	&= -\frac{1}{2}\big((F_{\tilde{J},int}(\Gamma) - F_{\tilde{J},int}(\res(\gamma))) -(L_{\tilde{J}}(\Gamma) - L_{\tilde{J}}(\res(\gamma))) + (V_{\tilde{J}}(\Gamma) - V_{\tilde{J}}(\res(\gamma)))\big)  \,.
\ea
Thus, we find that $g_{\tilde{J}}(\Gamma) + \frac{1}{2}F_{\tilde{J},ext}(\Gamma) + X_{\tilde{J}}(\res(\gamma))$ is a grading with respect to the contraction of $\gamma$, where we denoted
\ba
	X_{\tilde{J}}(\res(\gamma)) := \frac{1}{2} \big(F_{\tilde{J},int}(\res(\gamma)) - L_{\tilde{J}}(\res(\gamma)) + V_{\tilde{J}}(\res(\gamma)) - 2C_{\tilde{J}}(\res(\gamma))\big) \,.
\ea
Accordingly, we find that, as a sum of gradings,
\ba
	\sum_{J} g_{\tilde{J}}(\Gamma) + \frac{1}{2}\sum_J F_{\tilde{J},ext}(\Gamma) + \sum_J X_{\tilde{J}}(\res(\gamma))
\ea
is also a grading with respect to the contraction of $\gamma$. Let us denote $X(\res(\gamma)) := \sum_{J} X_{\tilde{J}}(\res(\gamma))$. Moreover, $\sum_J F_{\tilde{J},ext}(\Gamma) = 2F_{\prt\Gamma}$, the number of faces of the boundary $\prt\Gamma$ of $\Gamma$. Then, due to the grading property, one has:
\ba
	\sum_J g_{\tilde{J}}(\Gamma) + F_{\prt\Gamma} = \sum_J g_{\tilde{J}}(\gamma) + F_{\prt\gamma} + \sum_J g_{\tilde{J}}(\Gamma/\gamma) + F_{\prt(\Gamma/\gamma)} + X(\res(\gamma)) \,.
\ea
We may further simplify by noting that $F_{\prt\Gamma}=F_{\prt(\Gamma/\gamma)}$, since a subgraph contraction does not affect the boundary, so we may write
\ba
	\sum_J g_{\tilde{J}}(\Gamma) &= \sum_J g_{\tilde{J}}(\gamma) + \sum_J g_{\tilde{J}}(\Gamma/\gamma) + F_{\prt\gamma}(\res(\gamma)) + X(\res(\gamma))\,.
\ea
But now
\ba
 F_{\prt\gamma}(\res(\gamma)) + X(\res(\gamma)) &= \frac{1}{2} \sum_{J} \big(F_{\tilde{J}}(\res(\gamma)) - L_{\tilde{J}}(\res(\gamma)) + V_{\tilde{J}}(\res(\gamma)) - 2C_{\tilde{J}}(\res(\gamma))\big) \nn
 &= -\sum_{J} g_{\tilde{J}}(\res(\gamma)) \,,
\ea
so we finally obtain
\ba
	\sum_J g_{\tilde{J}}(\Gamma) = \sum_J g_{\tilde{J}}(\gamma) + \sum_J g_{\tilde{J}}(\Gamma/\gamma) - \sum_J g_{\tilde{J}}(\res(\gamma)) \,.
\ea

One may then explicitly check that all the genera of pinched jackets vanish for different choices of $\res(\gamma)$ for $\gamma\in\mc{G}_{sd}$ using the Euler character formula for $g_{\tilde{J}}$. For example, for $\res(\gamma) = V_{6;1}$, one has $V_{\tilde{J}}(V_{6;1})=6$, $L_{\tilde{J}}(V_{6;1})=12$ and $C_{\tilde{J}}(V_{6;1})=1$ for all $J$, so $g_{\tilde{J}}=-\frac{1}{2}(F_{\tilde{J}}(V_{6;1})-8)$. One must then count the number of faces for each jacket, which gives $F_{\tilde{J}}(V_{6;1}) = 8$ for all $J$, so $g_{\tilde{J}}(V_{6;1}) = 0$ for all $J$. Similar calculations may be done for the other possible residues of superficially divergent graphs, which results in the above statement.
\end{proof}


With the above
generalizations of the usual Connes-Kreimer framework, and the power-counting for the model given in Table \ref{tab:supdivclass}, we are then ready to define the Hopf algebra of BGR model Feynman graphs in the same fashion as above for local QFT. In particular, we define the coproduct for the 1PI Feynman graphs $\Gamma\in\mc{G}_{\mt{1PI}}$ as
\ba\label{eq:coprod2}
	\Delta(\Gamma) = \Gamma \otimes \1 + \1 \otimes \Gamma + \sum_{\substack{\gamma\in\cup\mc{G}_{sd}^\omega\\\gamma\subsetneq\Gamma}} \gamma \otimes \Gamma/\gamma \,.
\ea
where the subgraph contraction is defined as in Definition \ref{def:subcontr}. Recall that the summation runs over the set $\cup\mc{G}_{sd}^\omega$ of all disconnected unions of superficially divergent 1PI subgraphs. (See Fig.~\ref{fig:BGRgraphcoprodexample} for an example.)
\begin{figure}
\centering
\def\svgwidth{0.9\columnwidth}
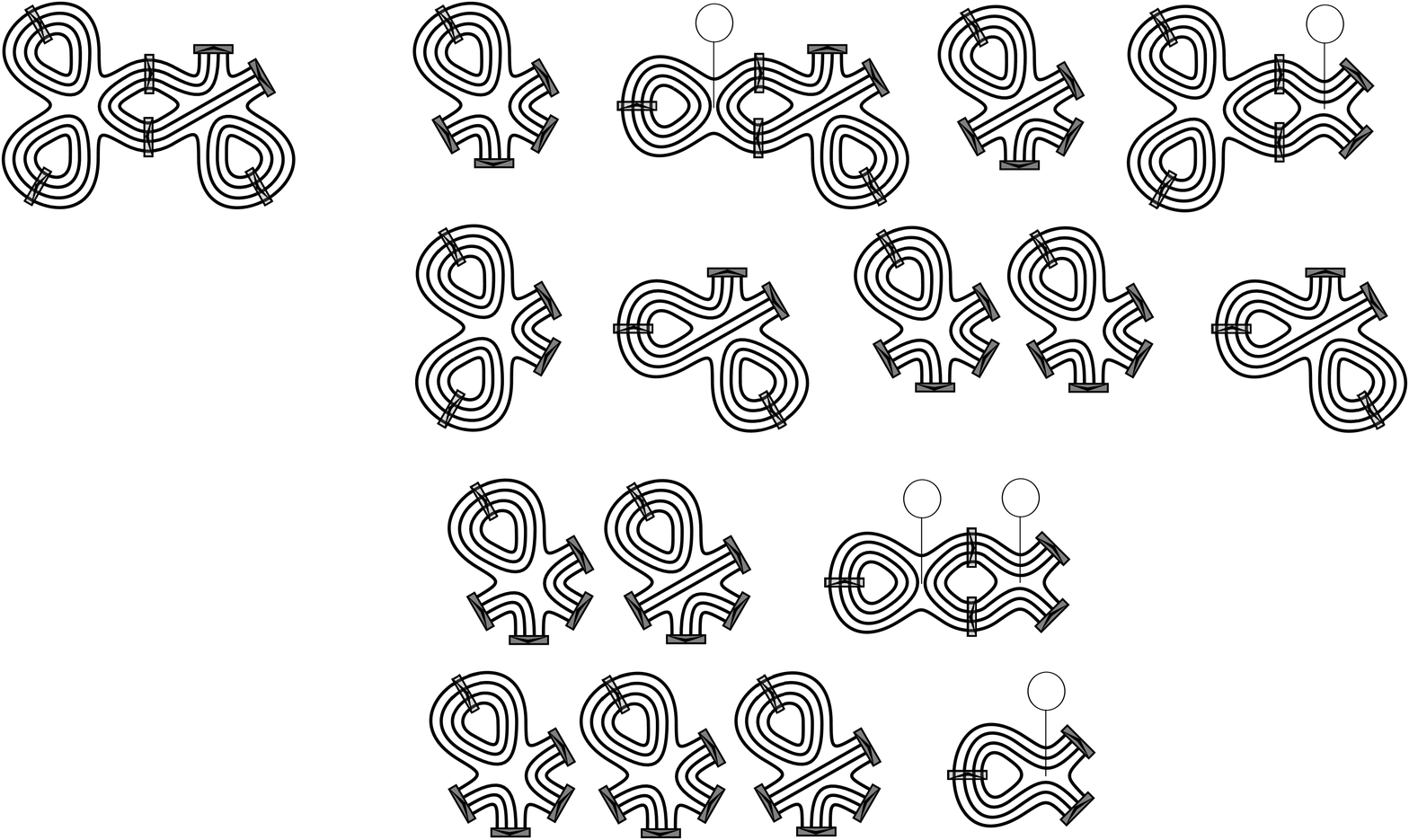
\caption{\label{fig:BGRgraphcoprodexample}An example of the coproduct $\Delta(\Gamma) \equiv \Gamma \otimes \1 + \1 \otimes \Gamma + \Delta'(\Gamma)$.}
\end{figure}

One has:
\begin{lemma}
	The set of superficially divergent Feynman diagrams of the BGR model is closed under the operations of subgraph contraction and insertion.
\end{lemma}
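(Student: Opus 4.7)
The plan is to establish the closure statement by showing that $\omega_{BGR}$ is additive under subgraph contraction — up to a correction supported at the residue $\res(\gamma)$ — and then exploiting the fact that, by construction, the residues in Table~\ref{tab:supdivclass} are precisely the admissible vertices of the extended model, each carrying the same divergence degree as the subgraphs it replaces.

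First I would verify that every combinatorial and topological invariant entering
\bas
	\omega_{BGR}(\Gamma) = -\tfrac{1}{3}\bigl[\sum_J g_{\tilde{J}} - \sum_{J_\prt} g_{J_\prt}\bigr] - (C_\prt -1) - V_4 - 2V''_2 - \tfrac{1}{2}(N_{ext} - 6)
\eas
satisfies the grading law $Q(\Gamma) = Q(\gamma) + Q(\Gamma/\gamma) - Q(\res(\gamma))$ under the contraction of $\gamma\subsetneq\Gamma$. For $\sum_J g_{\tilde{J}}$ this is the content of the lemma just proven (invoking $g_{\tilde{J}}(\res(\gamma))=0$ for every admissible residue). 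For $V_4$, $V''_2$, and $N_{ext}$ it is immediate from the definitions, since contraction replaces $\gamma$ by the single vertex $\res(\gamma)$. For the boundary quantities $C_\prt$ and $\sum_{J_\prt} g_{J_\prt}$ it holds because the boundary graph $\prt\Gamma$ depends only on the external legs of $\Gamma$ and is therefore unaffected by a contraction performed in the interior. Summing these grading laws with the coefficients dictated by the formula above yields the key additivity
\bas
	\omega_{BGR}(\Gamma) = \omega_{BGR}(\gamma) + \omega_{BGR}(\Gamma/\gamma) - \omega_{BGR}(\res(\gamma)).
\eas

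Next I invoke Table~\ref{tab:supdivclass}: every class of superficially divergent graph has a residue lying in the extended vertex set $\{V_{6;i},V_{4;1},V_{4;2},V^*_{4;1},V_{2;1},V_{2;2},V_{2;3}\}$, and the divergence degree assigned to the counterterm vertex in the table coincides with $\omega_{BGR}$ of the class — this is the \emph{conservation of the superficial divergence degree} emphasised in the introduction. Hence $\omega_{BGR}(\gamma)=\omega_{BGR}(\res(\gamma))$ for every $\gamma\in\mc{G}_{sd}^\omega$, and the additivity collapses to $\omega_{BGR}(\Gamma/\gamma)=\omega_{BGR}(\Gamma)\geq 0$, so $\Gamma/\gamma\in\mc{G}_{sd}^\omega$; the 1PI property is preserved because contracting internal edges cannot produce a bridge. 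The disjoint case $\gamma=\bigcup_i\gamma_i$ is reduced to the connected one by iteration, which is legitimate as disjoint subgraph contractions commute. For closure under insertion, if $\res(\gamma)\sim v$ for some vertex $v$ of $\Gamma\in\mc{G}_{sd}^\omega$, the additivity identity read backwards gives $\omega_{BGR}(\gamma\circ_v\Gamma)=\omega_{BGR}(\Gamma)\geq 0$, and 1PI-ness is again inherited.

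The main difficulty I anticipate is the careful handling of the boundary data $C_\prt$ and $\sum_{J_\prt}g_{J_\prt}$: although the external edges of $\Gamma$ are untouched by an interior contraction, one must confirm on a class-by-class basis that the strand structure at the boundary of the counterterm residue matches that of $\gamma$, so that $\prt\Gamma$ and its jackets are genuinely preserved (this is essentially automatic since tensorial invariance of $\res(\gamma)$ is built into the table, but deserves verification). A secondary bookkeeping point is the accounting of $V_4$ and $V''_2$ when $\res(\gamma)$ is itself a $V_{4;1}$ or $V_{4;2}$ vertex, which is again controlled entirely by Table~\ref{tab:supdivclass}.
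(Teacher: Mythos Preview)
Your argument is correct, but it proceeds quite differently from the paper. The paper does not invoke any additivity identity for $\omega_{BGR}$; instead it carries out an exhaustive class-by-class enumeration, checking for each of the eight classes $6_1,4_1,4_2,4_3,2_1,2_2,2_3,2_4$ which subgraph classes can occur and in which class the contracted (respectively inserted) graph lands, and recording the result in a transition table. Your route is more conceptual: you assemble the grading law $\omega_{BGR}(\Gamma)=\omega_{BGR}(\gamma)+\omega_{BGR}(\Gamma/\gamma)-\omega_{BGR}(\res(\gamma))$ term by term (using the jacket-genus lemma for the hardest piece), and then collapse it via $\omega_{BGR}(\gamma)=\omega_{BGR}(\res(\gamma))$, which is precisely how the counter-term vertices of Table~\ref{tab:supdivclass} are designed. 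This explains \emph{why} closure holds and would transfer to other tensor models with analogous power counting, whereas the paper's enumeration buys something else: it produces the explicit $8\times 8$ contraction table, which records not merely that $\Gamma/\gamma$ stays superficially divergent but exactly which class it falls into, and makes manifest which subgraph classes are mutually excluded. One point deserves more than the ``secondary bookkeeping'' label you give it: your additivity argument tacitly requires $\omega_{BGR}$ (equivalently $F-2P$) to be extended to graphs containing the counter-term vertices $V_{4;1}^*$, $V_{2;1}$, $V_{2;2}$, $V_{2;3}$ by adding their labeled degrees, since otherwise $\omega(\Gamma/\gamma)=\omega(\Gamma)-\omega(\gamma)$ can be negative; this extension is natural and is implicit in the paper's discussion of which $2$-point insertions conserve the divergence degree, but you should state it explicitly.
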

\begin{proof}
	Let us first consider the subgraph contraction on a class-by-class basis. Notice that a subgraph contraction does not affect the boundary of a graph, and therefore we need not consider the boundary properties. Accordingly, the contractions of subgraphs of a certain class give the following:
\begin{itemize}
	\item[$6_1$:] Contracting a subgraph in $6_1,4_1,4_3,2_1$ gives a graph in $6_1$. Other classes are not possible, since $6_1$ does not contain 4-valent vertices, and the genuses of its pinched jackets are zero.
	\item[$4_1$:] Contracting a subgraph in $6_1,4_1,4_3,2_1$ gives a graph in $4_1$. Other classes are not possible for the same reasons as above.
	\item[$4_2$:] Contracting a subgraph in $6_1,4_1,4_2,4_3,2_1$ gives a graph in $4_2$. Contracting a subgraph in $2_2$ gives a graph in $4_1$. Other subgraph classes are not possible.
	\item[$4_3$:] Contracting a subgraph in $6_1,4_1,4_3,2_1$ gives a graph in $4_3$. Other subgraph classes are not possible.
	\item[$2_1$:] Contracting a subgraph in $6_1,4_1,4_3,2_1$ gives a graph in $2_1$. Other subgraph classes are not possible.
	\item[$2_2$:] Contracting a subgraph in $6_1,4_1,4_2,4_3,2_1$ gives a graph in $2_2$. Contracting a subgraph in $2_2$ gives a graph in $2_1$. Other subgraph classes are not possible.
	\item[$2_3$:] Contracting a subgraph in $6_1,4_1,4_2,4_3,2_1$ gives a graph in $2_3$. Contracting a subgraph in $2_2$ gives a graph in $2_2$. Contracting a subgraph in $2_3$ gives a graph in $2_1$. Other subgraph classes are not possible.
	\item[$2_4$:] Contracting a subgraph in $6_1,4_1,4_3,2_1$ gives a graph in $2_4$. Contracting a subgraph in $2_4$ gives a graph in $2_1$. Other subgraph classes are not possible.
\end{itemize}
We may condense the above analysis into Table \ref{tab3}.
\begin{table}
\begin{center}
\begin{tabular}{c||c|ccc|cccc}
	row/col & $6_1$ & $4_1$ & $4_2$ & $4_3$ & $2_1$ & $2_2$ & $2_3$ & $2_4$ \\\hline\hline
	$6_1$ & $6_1$ & $6_1$ & 0 & $6_1$ & $6_1$ & 0 & 0 & 0 \\\hline
	$4_1$ & $4_1$ & $4_1$ & 0 & $4_1$ & $4_1$ & 0 & 0 & 0 \\
	$4_2$ & $4_2$ & $4_2$ & $4_2$ & $4_2$ & $4_2$ & $4_1$ & 0 & 0 \\
	$4_3$ & $4_3$ & $4_3$ & 0 & $4_3$ & $4_3$ & 0 & 0 & 0 \\\hline
	$2_1$ & $2_1$ & $2_1$ & 0 & $2_1$ & $2_1$ & 0 & 0 & 0 \\
	$2_2$ & $2_2$ & $2_2$ & $2_2$ & $2_2$ & $2_2$ & $2_1$ & 0 & 0 \\
	$2_3$ & $2_3$ & $2_3$ & $2_3$ & $2_3$ & $2_3$ & $2_2$ & $2_1$ & 0 \\
	$2_4$ & $2_4$ & $2_4$ & 0 & $2_4$ & $2_4$ & 0 & 0 & $2_1$ \\\hline\hline
\end{tabular}
\end{center}
\caption{\label{tab3}Various possibilities of contractions of superficially divergent Feynman graphs of the BGR model.}
\end{table}
We have marked a vanishing entry in the Table if a graph from the column class cannot be a subgraph of a graph from the row class.

\bigskip

Similarly, we have to consider the insertions of superficially divergent graphs into other superficially divergent graphs, as defined in Definition \ref{def:insert}. If we consider insertions into the original Feynman graphs of the model, which do not contain counter-term vertices, we must note the following: First, we may insert into the vertices only graphs, whose residue is equal to that vertex. Moreover, into bare propagators we may only insert superficially divergent graphs with two external edges from the class $2_1$, since these have divergence degree 2, which is needed in order to conserve the divergence degree of the graph under insertions. The other superficially divergent graphs with $N_{ext}=2$ can only be inserted into the corresponding counter-term vertices, which are not a part of the original bare model. With these restrictions in mind, we provide the following analysis of the insertions into the  graphs of the model:
\begin{itemize}
	\item[$6_1$:] Inserting a graph from the class $6_1,4_3,2_1$ yields a graph in the class $6_1$. Insertion of a graph in $4_2$ is not possible, since $6_1$ does not contain $V_{4;1}$ vertices.
	\item[$4_1$:] Inserting a graph from the class $6_1,4_3,2_1$ yields a graph in the class $4_1$. Insertion of a graph in $4_2$ is not possible, since $4_1$ does not contain $V_{4;1}$ vertices.
	\item[$4_2$:] Inserting a graph from the class $6_1,4_2,4_3,2_1$ yields a graph in the class $4_2$.
	\item[$4_3$:] Inserting a graph from the class $6_1,4_3,2_1$ yields a graph in the class $4_3$. Insertion of a graph in $4_2$ is not possible, since $4_3$ does not contain $V_{4;1}$ vertices.
	\item[$2_1$:] Inserting a graph from the class $6_1,4_3,2_1$ yields a graph in the class $2_1$. Insertion of a graph in $4_2$ is not possible, since $2_1$ does not contain $V_{4;1}$ vertices.
	\item[$2_2$:] Inserting a graph from the class $6_1,4_2,4_3,2_1$ yields a graph in the class $2_2$.
	\item[$2_3$:] Inserting a graph from the class $6_1,4_2,4_3,2_1$ yields a graph in the class $2_3$.
	\item[$2_4$:] Inserting a graph from the class $6_1,4_3,2_1$ yields a graph in the class $2_4$. Insertion of a graph in $4_2$ is not possible, since $2_4$ does not contain $V_{4;1}$ vertices.
\end{itemize}
Accordingly, all the allowed insertions lead to superficially divergent graphs, and the claim follows.
\end{proof}

Given the above Lemma, we may now follow along the lines of the coassociativity proof for the coproduct in \cite{TV}.
\begin{lemma}
	The coproduct defined in Equation (\ref{eq:coprod2}) is coassociative, if the set of superficially divergent Feynman diagrams is closed under the operations of subgraph contraction and insertion.
\end{lemma}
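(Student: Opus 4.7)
The plan is to verify the identity $(\Delta \otimes \id)\circ\Delta(\Gamma) = (\id \otimes \Delta)\circ\Delta(\Gamma)$ for an arbitrary $\Gamma \in \mc{G}_{1\mt{PI}}$; since $\Delta$ is extended as a multiplicative algebra homomorphism, coassociativity on generators propagates to all of $\mc{H}$. Expanding $\Delta$ twice and isolating the non-trivial part (the boundary terms involving $\1$ match immediately on both sides), the content reduces to matching the two triple sums
\begin{align*}
\text{LHS:}\quad & \sum_{\substack{\gamma' \subseteq \gamma \subsetneq \Gamma \\ \gamma,\gamma' \in \cup\mc{G}_{sd}^\omega}} \gamma' \otimes \gamma/\gamma' \otimes \Gamma/\gamma, \\
\text{RHS:}\quad & \sum_{\substack{\gamma' \subsetneq \Gamma,\ \gamma'' \subseteq \Gamma/\gamma' \\ \gamma',\gamma'' \in \cup\mc{G}_{sd}^\omega}} \gamma' \otimes \gamma'' \otimes (\Gamma/\gamma')/\gamma''.
\end{align*}

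The next step is to exhibit a bijection between the two index sets via $(\gamma',\gamma) \mapsto (\gamma', \gamma/\gamma')$. In the forward direction, if $\gamma' \subseteq \gamma$ are both superficially divergent, then closure of $\cup\mc{G}_{sd}^\omega$ under subgraph contraction (established in the preceding Lemma) guarantees $\gamma'' := \gamma/\gamma' \in \cup\mc{G}_{sd}^\omega$, and by construction $\gamma'' \subseteq \Gamma/\gamma'$. Conversely, given a pair $(\gamma',\gamma'')$ from the RHS, I recover $\gamma$ as the unique subgraph of $\Gamma$ obtained by inserting $\gamma'$ back into $\gamma''$ at the vertex coming from $\res(\gamma')$; closure under insertion guarantees $\gamma \in \cup\mc{G}_{sd}^\omega$, and the two operations are mutually inverse.

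It then remains to verify the identity $(\Gamma/\gamma')/(\gamma/\gamma') = \Gamma/\gamma$, which ensures that the third tensor factors of the two expansions coincide. This is a direct bookkeeping exercise on internal edges: the internal edges of $\gamma/\gamma'$ viewed inside $\Gamma/\gamma'$ are precisely those of $\gamma$ not lying in $\gamma'$, and contracting them subsequently yields the same quotient as contracting $\gamma$ directly in $\Gamma$. The main obstacle, in my view, is the case where $\gamma = \cup_i \gamma_i$ is a disjoint union of superficially divergent components with some $\gamma_i$ nested inside a component of $\gamma'$ while others straddle the remainder of $\Gamma$; one must invoke Definition \ref{def:subcontr} iteratively and confirm that the order of the partial contractions is immaterial. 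An additional delicacy specific to the BGR setting is the disconnected residue $V_{4;2}$, which can introduce additional vertex structure upon contraction and must be tracked carefully when reconstructing $\gamma$ by insertion. Once these combinatorial compatibilities are verified, the bijection identifies the two sums term-by-term and coassociativity follows, in direct analogy with the proof strategy of \cite{TV}.
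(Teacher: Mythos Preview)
Your proposal is correct and follows essentially the same route as the paper: reduce to generators, strip off the trivial $\1$-terms, and match the two triple sums via the bijection $(\gamma',\gamma)\leftrightarrow(\gamma',\gamma/\gamma')$, using closure under contraction for one direction, closure under insertion for the other, and the transitivity identity $(\Gamma/\gamma')/(\gamma/\gamma')=\Gamma/\gamma$. The only cosmetic slip is that after removing the boundary terms your index sets should carry strict inclusions $\gamma'\subsetneq\gamma$ and $\gamma''\subsetneq\Gamma/\gamma'$ rather than $\subseteq$, matching the paper's $\Delta'$ formulation.
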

\begin{proof}
	First, by a simple calculation we find that
\ba
	(\Delta \otimes \id_\mc{H})\circ \Delta - (\id_\mc{H} \otimes \Delta)\circ \Delta = (\Delta' \otimes \id_\mc{H})\circ \Delta' - (\id_\mc{H} \otimes \Delta')\circ \Delta'
\ea
for $\Delta(\Gamma) \equiv \Gamma \otimes \1 + \1 \otimes \Gamma + \Delta'(\Gamma)$, and therefore
\ba
	(\Delta \otimes \id_\mc{H})\circ \Delta = (\id_\mc{H} \otimes \Delta)\circ \Delta \quad \Leftrightarrow\quad (\Delta' \otimes \id_\mc{H})\circ \Delta' = (\id_\mc{H} \otimes \Delta')\circ \Delta' \,.
\ea
Moreover, it is enough to show the coassociativity property for 1PI Feynman graphs, which are the generators of $\mc{H}$, since $\Delta'$ is an algebra homomorphism, i.e., $\Delta(\Gamma\cup\Gamma')=\Delta(\Gamma)\cup\Delta(\Gamma')$, which is also easy to verify by a direct calculation. Now, on the one hand we have
	\ba
		(\Delta' \otimes \id_\mc{H})\circ \Delta' \Gamma = \sum_{\substack{\gamma\in\cup\mc{G}_{sd}^\omega\\\gamma\subsetneq\Gamma}} \Delta'\gamma \otimes \Gamma/\gamma = \sum_{\substack{\gamma\in\cup\mc{G}_{sd}^\omega\\\gamma\subsetneq\Gamma}} \sum_{\substack{\gamma'\in\cup\mc{G}_{sd}^\omega\\\gamma'\subsetneq\gamma}} \gamma'\otimes \gamma/\gamma' \otimes \Gamma/\gamma \,.
	\ea
	On the other hand we have
	\ba
		(\id_\mc{H} \otimes \Delta')\circ \Delta' \Gamma = \sum_{\substack{\gamma'\in\cup\mc{G}_{sd}^\omega\\\gamma'\subsetneq\Gamma}} \gamma' \otimes \Delta'(\Gamma/\gamma') = \sum_{\substack{\gamma'\in\cup\mc{G}_{sd}^\omega\\\gamma'\subsetneq\Gamma}} \sum_{\substack{\gamma''\in\cup\mc{G}_{sd}^\omega\\\gamma''\subsetneq\Gamma/\gamma'}} \gamma' \otimes \gamma'' \otimes (\Gamma/\gamma')/\gamma'' \,.
	\ea
	For the coassociativity property, we must show that these are equal for all algebra generators $\Gamma\in\mc{G}_{\mt{1PI}}$. Notice first that in the first sum we may change the order of the two summations as
	\ba
		(\Delta' \otimes \id_\mc{H})\circ \Delta' \Gamma = \sum_{\substack{\gamma'\in\cup\mc{G}_{sd}^\omega\\\gamma'\subsetneq\Gamma}} \sum_{\substack{\gamma\in\cup\mc{G}_{sd}^\omega\\\gamma'\subsetneq\gamma\subseteq\Gamma}} \gamma'\otimes \gamma/\gamma' \otimes \Gamma/\gamma \,.
	\ea
	It is therefore enough to show that for a fixed $\gamma'$, we have
	\ba
		\sum_{\substack{\gamma\in\cup\mc{G}_{sd}^\omega\\\gamma'\subsetneq\gamma\subsetneq\Gamma}} \gamma/\gamma' \otimes \Gamma/\gamma = \sum_{\substack{\gamma''\cup\in\mc{G}_{sd}^\omega\\\gamma''\subsetneq\Gamma/\gamma'}} \gamma'' \otimes (\Gamma/\gamma')/\gamma'' \,.
	\ea
	These two expressions coincide, if we can set $\gamma''=\gamma/\gamma'$ in the summation for all $\gamma,\gamma',\gamma''\in\cup\mc{G}_{sd}^\omega$ such that $\gamma'\subsetneq\gamma\subsetneq\Gamma$ and $\gamma''\subsetneq\Gamma/\gamma'$, since then we have $(\Gamma/\gamma')/\gamma'' = (\Gamma/\gamma')/(\gamma/\gamma')=\Gamma/\gamma$. This amounts to two requirements:
	\begin{enumerate}
		\item Closure of $\mc{G}_{sd}^\omega$ under contraction: $\gamma/\gamma'\in\mc{G}_{sd}^\omega$ for all $\gamma,\gamma'\in\mc{G}_{sd}^\omega$ such that $\gamma'\subsetneq\gamma$, so that we can find $\gamma'' = \gamma/\gamma' \subsetneq\Gamma/\gamma'$ in $\cup\mc{G}_{sd}^\omega$.
		\item Closure of $\mc{G}_{sd}^\omega$ under insertion: For all $\gamma',\gamma''\in\mc{G}_{sd}^\omega$ we have $\gamma'\circ_v \gamma''\in\mc{G}_{sd}^\omega$ for any allowed insertion, so that we can find $\gamma = \gamma'\circ_v \gamma'' \subsetneq\Gamma$ in $\cup\mc{G}_{sd}^\omega$.
	\end{enumerate}
	As we have assumed these properties to hold, we conclude the proof.
\end{proof}

Our main theorem then follows:
\begin{theorem}
	Let $\mc{H}_{\mt{BGR}}$ be the unital associative algebra freely generated by the 1PI Feynman graphs of the Ben Geloun-Rivasseau TGFT model. Then $(\mc{H}_{\mt{BGR}},u,m,\epsilon,\Delta,S)$ is a Hopf algebra, where
	\begin{itemize}
		\item The unit $u:\mb{C}\rightarrow\mc{H}_{\mt{BGR}}$ is given by $u(1)=\1$, where $\1$ is the empty graph.
		\item The product $m:\mc{H}_{\mt{BGR}}\otimes\mc{H}_{\mt{BGR}}\rightarrow\mc{H}_{\mt{BGR}}$ is given by the the disjoint union of graphs.
		\item The counit $\epsilon:\mc{H}_{\mt{BGR}}\rightarrow\mb{C}$ is given by  $\epsilon(\1)=1$ and $\epsilon(\Gamma)=0$ for $\Gamma\neq\1$.
		\item The coproduct $\Delta:\mc{H}_{\mt{BGR}}\rightarrow\mc{H}_{\mt{BGR}}\otimes\mc{H}_{\mt{BGR}}$ is given by
			\ba
				\Delta(\Gamma) = \Gamma \otimes \1 + \1 \otimes \Gamma + \sum_{\substack{\gamma\in\cup\mc{G}_{sd}^\omega\\\gamma\subsetneq\Gamma}} \gamma \otimes \Gamma/\gamma \,.
			\ea
			which is extended to the whole of $\mc{H}_{\mt{BGR}}$ as an algebra homomorphism.
		\item The antipode $S:\mc{H}_{\mt{BGR}}\rightarrow\mc{H}_{\mt{BGR}}$ is given by the recursive formula
			\ba
				S(\Gamma) = -\Gamma - \sum_{\substack{\gamma\in\mc{G}_{sd}^\omega\\\gamma\subsetneq\Gamma}} S(\gamma) \Gamma/\gamma
			\ea
			with $S(\1)=\1$.
	\end{itemize}
\end{theorem}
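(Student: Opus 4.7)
The plan is to verify that $\mc{H}_{\mt{BGR}}$ is a connected graded bialgebra and then invoke the Manchon lemma cited earlier (Corollary II.3.2 in \cite{Manchon}) to automatically obtain the Hopf algebra structure together with the stated recursive formula for the antipode. Most of the bialgebra axioms are immediate from the definitions; the nontrivial ingredient is coassociativity of $\Delta$, which is already handled by the previous lemma using closure of $\mc{G}_{sd}^\omega$ under subgraph contraction and insertion.

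First, I would check the algebra axioms. Since $\mc{H}_{\mt{BGR}}$ is by definition the polynomial algebra $\mb{C}[\1,\mc{G}_{1\mt{PI}}]$ with the product $m$ given by disjoint union, associativity and commutativity of $m$ and the unit property of $u(1)=\1$ are immediate. Next, for the coalgebra axioms, coassociativity of $\Delta$ follows directly from the preceding lemma, given the previously established closure of $\mc{G}_{sd}^\omega$ under subgraph contraction and insertion. The counit axiom $(\epsilon\otimes\id)\circ\Delta(\Gamma) = \Gamma = (\id\otimes\epsilon)\circ\Delta(\Gamma)$ is easily verified on generators $\Gamma\in\mc{G}_{1\mt{PI}}$: the only term in $\Delta(\Gamma) = \Gamma\otimes\1 + \1\otimes\Gamma + \Delta'(\Gamma)$ that survives after applying $\epsilon$ on one side is the one carrying $\1$ on that side, since $\epsilon$ kills every nonempty graph and every disjoint union involving a nonempty graph.

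Second, I would verify the bialgebra compatibility: that $\Delta$ and $\epsilon$ are algebra homomorphisms. For $\epsilon$, this is trivial from its definition. For $\Delta$, this is built into the construction, as $\Delta$ is defined on generators and extended multiplicatively to disjoint unions; one just needs to check that this is consistent with the stated formula, i.e., that the coproduct of a disjoint union matches the disjoint union of coproducts. This follows because superficially divergent subgraphs of $\Gamma\cup\Gamma'$ decompose uniquely as disjoint unions of superficially divergent subgraphs of $\Gamma$ and of $\Gamma'$, and the contraction operation respects the disjoint-union structure.

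Third, I would exhibit a grading making $\mc{H}_{\mt{BGR}}$ connected. The natural candidate, as in the local case, is the grading by the number of internal edges (propagators),
\ba
\mc{H}_{\mt{BGR}} = \bigoplus_{n\geq 0}\mc{H}_{\mt{BGR}}^{(n)},
\ea
with $\mc{H}_{\mt{BGR}}^{(n)}$ the span of disjoint unions of 1PI Feynman graphs with $n$ internal edges in total. Compatibility of the grading with the product is obvious since disjoint union adds internal edges. Compatibility with the coproduct follows since for every summand $\gamma \otimes \Gamma/\gamma$ in $\Delta'(\Gamma)$ the number of internal edges of $\gamma$ plus that of $\Gamma/\gamma$ equals the number of internal edges of $\Gamma$ (the edges of $\gamma$ reappear in $\gamma$, and the edges outside $\gamma$ reappear in $\Gamma/\gamma$, with the edges of $\gamma$ removed during contraction). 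Moreover, $\mc{H}_{\mt{BGR}}^{(0)}=\mb{C}\1=u(\mb{C})$, since any 1PI Feynman graph by definition has at least one internal edge, so the bialgebra is connected. Applying the Manchon lemma then yields the Hopf algebra structure and the recursive antipode formula, which matches the form stated in the theorem.

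The only point that requires genuine care, and which I expect to be the main subtlety, is verifying compatibility of the grading with the coproduct \emph{in the BGR setting}, because the subgraph contraction operation is model-dependent (see Definition \ref{def:subcontr} and the BGR residue definition): one must be sure that contracting $\gamma$ inside $\Gamma$ never spuriously adds or deletes internal edges, and in particular that the replacement of a 6-valent self-loop by a counter-term $V_{4;1}^*$-vertex (and similar replacements for the various 2-valent counter-terms of Table \ref{tab:supdivclass}) keeps the arithmetic consistent. The BGR residue is defined precisely so that it removes all internal edges and internal faces of $\gamma$ while leaving the rest of $\Gamma$ untouched, so the edge count is additive across contraction, and the grading argument goes through. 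All other steps are then routine.
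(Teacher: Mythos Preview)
Your proposal is correct and follows essentially the same approach as the paper: both rely on the previously established coassociativity lemma, observe that $\mc{H}_{\mt{BGR}}$ is a connected graded bialgebra graded by the number of internal edges, and then invoke the Manchon result to obtain the antipode recursively. Your write-up is simply more explicit about the routine verifications (counit, multiplicativity of $\Delta$, additivity of the internal-edge count under contraction) that the paper dismisses with ``all the other Hopf algebra properties follow easily.''
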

\begin{proof}
	We have shown above that the coproduct is coassociative. All the other Hopf algebra properties follow easily. In particular, we note that $(\mc{H}_{\mt{BGR}},u,m,\epsilon,\Delta)$ is a graded connected bialgebra, graded by the number of internal edges. Therefore, the antipode follows from Formula (\ref{eq:Srecurs}).
\end{proof}

The 6-point function Feynman amplitude (without two- or four-point sub-divergencies)
of the Ben Geloun-Rivasseau model can then by formally identified to the general formula given in Theorem \ref{thm:ren}
\ba
	\phi_R(\Gamma) = S^\phi_R(\phi(\Gamma)) \star \phi(\Gamma) \,,
\ea
where
\ba
	S^\phi_R(\phi(\Gamma)) = - R[\phi(\Gamma)] - R\left[\sum_{\substack{\gamma\subsetneq\Gamma\\\gamma\in\mc{G}_{sd}^\omega}} S^\phi_R(\phi(\gamma)) \phi(\Gamma/\gamma) \right] \,,
\ea
is given through recursion. 
If one uses a dimensional regularization scheme, then the formula above can then be identified to the renormalized Feynman amplitude of any graph (recall however that in the 
original \cite{BR} paper, the position space multi-scale analysis Taylor expansion was used to prove renormalizability).

\section*{Acknowledgements}
The authors acknowledge Joseph Ben Geloun, R\u azvan Gur\u au and Vincent Rivasseau for
various discussions and suggestions.
The authors are partially supported by the "Combinatoire  alg\'ebrique" Univ. Paris 13, Sorbonne Paris Cit\'e BQR  grant and the "Cartes 3D" INS2I CNRS PEPS grant.
A. Tanasa is partially supported by 
the grant PN 09 37 01 02.
M. Raasakka gratefully acknowledges financial support from Emil Aaltonen Foundation.

{\footnotesize
}

\noindent
{\it\small $(a)$ LIPN, Institut Galil\'ee, CNRS UMR 7030, 
Universit\'e Paris 13, Sorbonne Paris Cit\'e,}\\{\it\small 99 av. Clement, 93430 Villetaneuse, France, EU}\\
{\it\small $(b)$ Horia Hulubei National Institute for Physics and Nuclear Engineering,
P.O.B. MG-6, 077125 Magurele, Romania, EU}

\end{document}